\newtheorem{proposition}{Proposition} % For propositions
\Crefname{figure}{Figure}{Figures}
\Crefname{table}{Table}{Tables}
\Crefname{section}{Section}{Sections}
\Crefname{appendix}{Appendix}{Appendices}
\title{Quantifying Uncertainty in Error Consistency: Towards Reliable Behavioral Comparison of Classifiers}
\author{%
  Thomas Klein$^{1, 2, 3, 4}$\thanks{$^1$ Max Planck Institute for Intelligent Systems, Tübingen, Germany $^2$ University of Tübingen. $^3$ Tübingen AI Center $^4$ ELLIS Institute Tübingen. $^5$ CSZ Institute for Artificial Intelligence and Law. $^\ddagger$ Joint supervision. Correspondence to: \href{mailto:t.klein@uni-tuebingen.de}{t.klein@uni-tuebingen.de}. Code: \href{https://github.com/wichmann-lab/error_consistency}{github.com/wichmann-lab/error\_consistency}.} \\
  \And
  Sascha Meyen$^{2}$ \\
  \AND
  Wieland Brendel$^{\ddagger 1, 2, 3, 4}$ \\
  \And
  Felix A. Wichmann$^{\ddagger 2}$ \\
  \And
  Kristof Meding$^{\ddagger 2, 5}$ \\
}
\begin{document}

\maketitle

\begin{abstract}
Benchmarking models is a key factor for the rapid progress in machine learning (ML) research. Thus, further progress depends on improving benchmarking metrics. A standard metric to measure the behavioral alignment between ML models and human observers is \emph{error consistency} (EC). EC allows for more fine-grained comparisons of behavior than other metrics such as accuracy, and has been used in the influential Brain-Score benchmark to rank different DNNs by their behavioral consistency with humans.  Previously, EC values have been reported without confidence intervals. However, empirically measured EC values are typically noisy---thus, without confidence intervals, valid benchmarking conclusions are problematic. Here we improve on standard EC in two ways: First, we show how to obtain confidence intervals for EC using a bootstrapping technique, allowing us to derive significance tests for EC. Second, we propose a new computational model relating the EC between two classifiers to the implicit probability that one of them copies responses from the other. This view of EC allows us to give practical guidance to scientists regarding the number of trials required for sufficiently powerful, conclusive experiments.
Finally, we use our methodology to revisit popular NeuroAI-results. We find that while the general trend of behavioral differences between humans and machines holds up to scrutiny, many reported differences between deep vision models are statistically insignificant. Our methodology enables researchers to design adequately powered experiments that can reliably detect behavioral differences between models, providing a foundation for more rigorous benchmarking of behavioral alignment. 
  
\end{abstract}

\section{Introduction}
\label{sec:intro}
Consider the following problem: \emph{Given two classifiers operating on the same domain, how should we quantify their similarity?} This abstract problem has become highly practically relevant in the context of cognitive science and beyond. In cognitive science, deep neural networks (DNNs) have been proposed as models of the human visual system \citep{doerig2023neuroconnectionist, kriegeskorte2015deep, cichy2019deep, kietzmann2017deep, yamins2014performance}, and thus the question arises how similarly they behave to human observers. In other domains, e.g. law or medicine, it is also crucial to know whether DNNs interpret and judge information similarly to humans. Thus, a metric is needed that can reliably quantify the degree of behavioral similarity between DNNs and human decision makers.

The standard method proposed for this purpose is \emph{error consistency} (EC) \citep{geirhos2020beyond}, which has seen wide application in the context of human-machine comparisons, e.g. in \cite{geirhos2021partial, ollikka2024humans, li2025laion, parthasarathy2023self}. Similarity metrics based on EC have also seen application in the behavioral similarity section of the vision-branch of Brain-Score \citep{schrimpf2018brainscore}. More recently, a variation of EC---sharing many of the relevant properties, so that our considerations apply---has been proposed as a metric to compare Large Language Models (LLMs) \citep{goel2025great}.

EC builds on Cohen's $\kappa$ \citep{cohen1960coefficient}, which considers the trial-level agreement observed for a pair of classifiers to compute a scalar similarity score, which is $1$ if the two models are behaviorally indistinguishable and $-1$ if they are as different as they could possibly be. A value of \(0\) indicates independence. In practice, EC is reported as a point estimate of a unit-less scalar value, based on classification decisions obtained on a single set of test stimuli, like in \cref{fig:demo}a.
%In vision science, for example, it is a measure of agreement on which individual images of a sequence of images are easy and difficult to recognise---thus how similar DNNs and humans perceive the visual world. 
However, since EC relies on trial-by-trial similarity, the measure is inherently noisy, as individual trials can have a large impact on the final score, especially when the total number of trials is small---as is often the case when conducting experiments with humans.

\begin{figure}
    \centering
    \includegraphics[width=\linewidth]{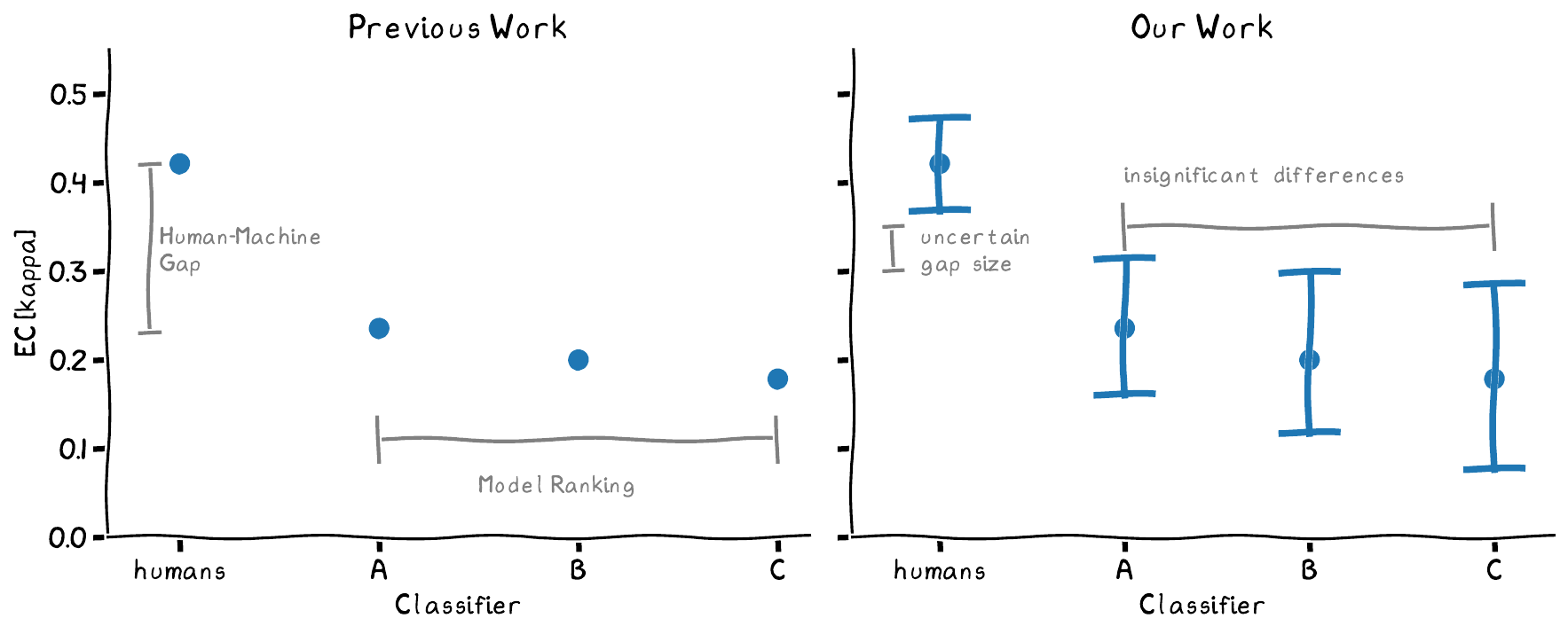}
    \caption{\textbf{Schematic: Error Consistency is noisy.} The gist of our findings is that by calculating confidence intervals for empirical EC values, we show that while the human-machine gap is probably real, the measure is not stable enough to resolve differences between models.}
    \label{fig:intro}
\end{figure}

Our contributions are as follows:

\begin{itemize}
    \item We explain how confidence intervals for empirical EC values can be estimated on any dataset on which EC may be calculated using bootstrapping \citep{efron1994introduction}.
    \item We propose a novel method of modeling classifier consistency, allowing us to simulate classifiers with a specific consistency and thus to plan sufficiently powerful, conclusive experiments.
    \item We apply our methods to two influential benchmarks (Model-vs-Human \citep{geirhos2021partial} and Brain-Score \citep{schrimpf2018brainscore}), showing that the currently available amount of human reference data is not sufficient to draw conclusions about model differences in many cases, as illustrated in \cref{fig:intro}.
    \item We provide a python package containing utility functions for obtaining confidence intervals around error consistency values, calculating p-values, planning experiments, and testing results for significance.
\end{itemize}

We will begin by providing a brief introduction to EC in \cref{sec:explanation}, to then explain how confidence intervals can be calculated in \cref{sec:calculating_cis}. A detailed Related Work section is provided in \cref{sec:appdx_related_work} since we discuss related work on measuring behavioral alignment, Error Consistency, and benchmarking.

\section{Error Consistency}
\label{sec:explanation}
First, we provide a brief introduction to error consistency and introduce our notation; see \cite{geirhos2020beyond} for an in-depth description. EC is measured by applying Cohen's $\kappa$ to evaluate whether two classifiers' responses are jointly correct or incorrect---that is, whether they are consistent about when they make errors. The experimental setting in which EC can be calculated is the following: Two classifiers categorize each of $N$ samples (e.g., natural images) into one of $K$ classes (e.g., the 1,000 ImageNet classes). Each sample, $x_i$, has a ground-truth label $y_i \in \{1,...,K\}$ and receives classifications, $\hat{y}_i^{(j)} \in \{1,...,K\}$, with $j \in \{1, 2\}$. Whether these classifications are correct or not is coded as $r_i^{(j)} \in \{0,1\}$ where $r_i^{(j)} = 1$ if and only if the $j$-th classifier responded with the true label, $\hat{y}_i^{(j)} = y_i$. This renders EC agnostic to the number of classes of the classification task because it only considers agreement on which samples are jointly classified correctly and incorrectly.

In this setting, EC is measuring the agreement between $r^{(1)}$ and $r^{(2)}$ using Cohen's $\kappa$ (which is typically applied to $\hat{y}^{(1)}$ and $\hat{y}^{(2)}$):

\begin{equation}
    \label{eq:kappa}
    \kappa = \frac{p_{obs} - p_{exp}}{1 - p_{exp}}
\end{equation}

where $p_{obs}$ is the observed agreement, that is, the proportion of trials which are jointly classified correctly or incorrectly. This quantity is normalized by the amount of agreement expected from independent classifiers, $p_{exp}$. This normalization accounts for the fact that classifiers with high accuracies are a priori expected to agree more often in their classifications than classifiers with low accuracies.

However, the normalization does not render EC fully orthogonal to the accuracies of the classifiers, because Cohen's $\kappa$ depends on the marginal distributions of classifier responses \citep{Falotico_2014, Grant2016}. This is easy to see in the limiting case where two classifiers have perfect accuracies ($p_1 = p_2 = 1$): Division by zero renders EC undefined because, intuitively, error consistency can not be evaluated in the absence of errors. Moreover, if exactly one classifier has perfect accuracy, EC will be 0 \emph{irrespective of the accuracy of the other classifier}. This phenomenon, which we prove in \cref{sec:appdx_proof_ceiling}, may be counterintuitive and can lead to estimation instabilities, see below.

EC is bounded by the mismatch of the two classifiers' accuracies \citep{geirhos2020beyond}. We provide an instructive visualization of these asymmetric bounds in \cref{fig:demo}a. Only when the accuracies of both classifiers are equal can all values in $[-1, 1]$ be achieved. This implies that EC can never be high between two classifiers if their accuracies are substantially different. In practice, one can therefore not take EC at face value: EC has to be interpreted with the accuracies of the two classifiers in mind, and is, unfortunately, strongly susceptible to both floor- and ceiling-effects. The precondition for a straightforward interpretation of EC is thus a good match of classifiers' accuracies.

\begin{figure}[!h]
    \centering
    \includegraphics[width=0.8\linewidth]{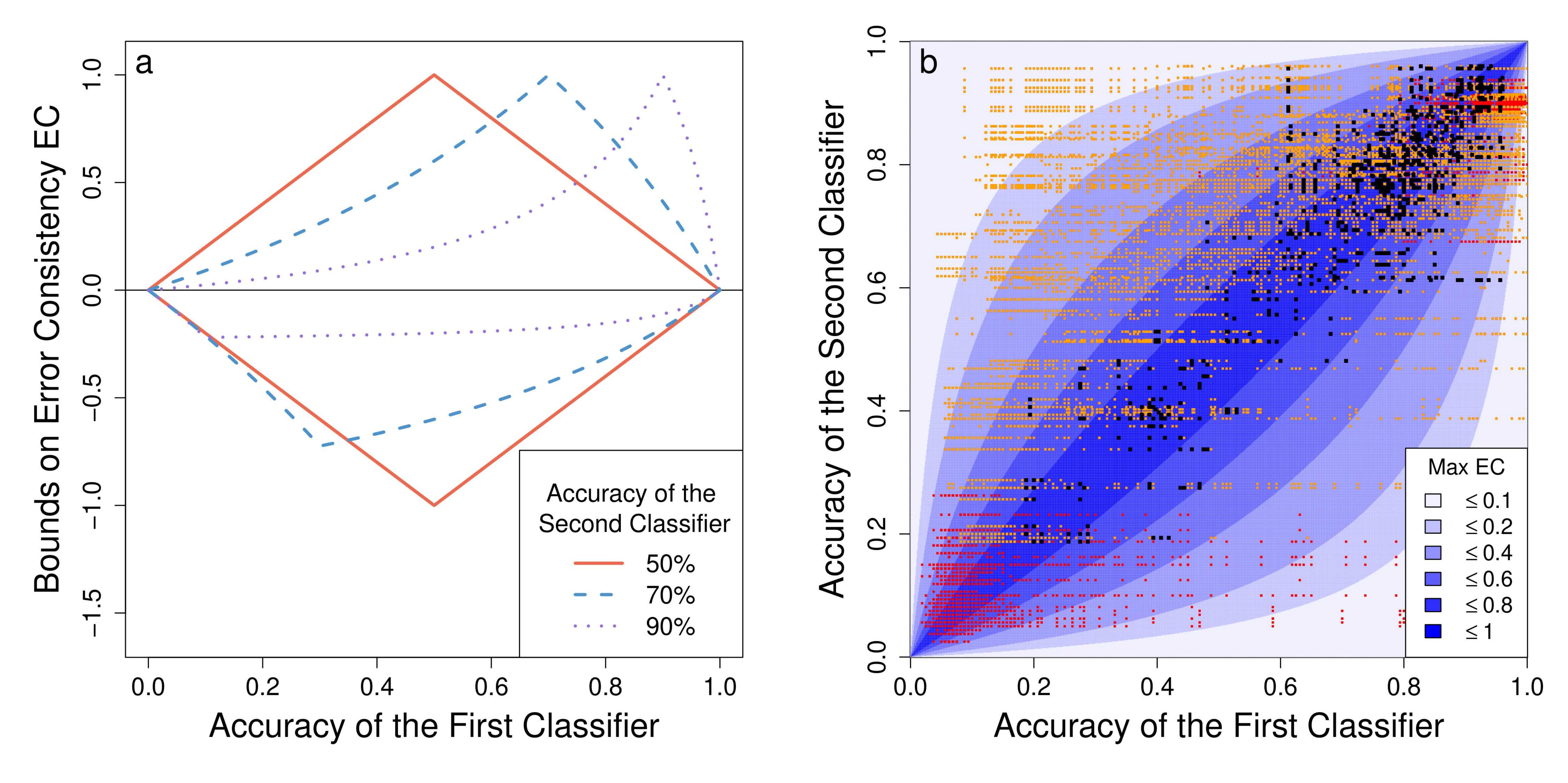}
    \caption{\textbf{Theoretical issues with EC.} (a) The bounds on error consistency (EC) depend on the mismatch between the accuracies of the two compared classifiers. (b) Similar to (a) but with accuracies on the x and y axes. Here, each orange dot corresponds to one model-vs-human comparison and each black dot to one human-vs-human comparison in the analysis of \cite{geirhos2021partial}. Red dots correspond to conditions that were sampled but excluded for the analysis by the original authors. Human-vs-human comparisons have lower accuracy mismatches, so EC values are expected to be higher a priori.}
    \label{fig:demo}
\end{figure}

However, previous work has evaluated EC in conditions in which accuracies of the two classifiers were vastly different, see \cref{fig:demo}b. Even without considering their joint probabilities, EC between classifiers must be low in these cases because their accuracy mismatch imposes an upper bound on the observable EC values.

To formalize this dependence of EC on classifiers' accuracy match, we present two propositions. First, when the accuracies of the two classifiers are equal, EC can best be understood as the proportion of copied responses: In a model in which the second classifier copies the responses of the first with probability $p_{copy}$ and responds independently at random otherwise, EC is related to that copy probability. In \cref{prop:simple}, we present this result as a general property of Cohen's $\kappa$ (for relating $\hat{y}^{(1)}$ and $\hat{y}^{(2)}$) which immediately translates to the application as EC (relating $r^{(1)}$ and $r^{(2)}$), where equal marginal distributions correspond to equal accuracies.

\begin{proposition} 
\label{prop:simple}
Let a classifier give responses according to some marginal distribution $\hat{y}^{(1)} \sim \tilde{P}^{(1)}$. If another classifier copies that response $\hat{y}^{(2)} = \hat{y}^{(1)}$ with probability $p_{copy}$ and otherwise draws independently from a marginal distribution $\hat{y}^{(2)} \sim \tilde{P}^{(2)}$ that is the same as the first $\tilde{P}^{(1)} = \tilde{P}^{(2)}$, then 
\begin{align*}
    \text{Cohen's }\kappa = p_{copy}.
\end{align*}
\end{proposition}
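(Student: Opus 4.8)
The plan is to compute $p_{obs}$ and $p_{exp}$ directly from the copying model and substitute into the definition of $\kappa$ in \cref{eq:kappa}. First I would set up notation: write $\tilde{P}^{(1)} = \tilde{P}^{(2)} =: \tilde{P}$ with class probabilities $\pi_k := \tilde{P}(\hat{y} = k)$ for $k \in \{1,\dots,K\}$, and let $C \in \{0,1\}$ be the latent indicator that the second classifier copies, so $\Pr(C=1) = p_{copy}$. The key structural observation is that conditional on $C=1$ the two responses are identical, and conditional on $C=0$ they are independent draws from $\tilde P$ (the first response's distribution and the fresh draw are independent, and the fresh draw has the same law as the first).

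The main computation is the observed agreement. I would condition on $C$:
\begin{align*}
    p_{obs} = \Pr(\hat{y}^{(1)} = \hat{y}^{(2)}) = \Pr(C=1) \cdot 1 + \Pr(C=0) \cdot \Pr(\hat{y}^{(1)} = \hat{y}^{(2)} \mid C=0) = p_{copy} + (1-p_{copy})\sum_{k} \pi_k^2,
\end{align*}
using independence in the $C=0$ branch. Next I would compute the marginal distribution of $\hat{y}^{(2)}$ and check it equals $\tilde P$: conditional on $C=1$ it is distributed as $\hat y^{(1)} \sim \tilde P$, and conditional on $C=0$ it is a fresh draw from $\tilde P$, so unconditionally $\hat y^{(2)} \sim \tilde P$ as well, i.e. $\Pr(\hat y^{(2)} = k) = \pi_k$. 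Hence the chance agreement expected from the (matching) marginals is $p_{exp} = \sum_k \Pr(\hat y^{(1)}=k)\Pr(\hat y^{(2)}=k) = \sum_k \pi_k^2$.

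Finally I would substitute into \cref{eq:kappa}: with $s := \sum_k \pi_k^2$,
\begin{align*}
    \kappa = \frac{p_{obs} - p_{exp}}{1 - p_{exp}} = \frac{\big(p_{copy} + (1-p_{copy})s\big) - s}{1 - s} = \frac{p_{copy}(1 - s)}{1 - s} = p_{copy},
\end{align*}
valid whenever $s < 1$, i.e. whenever $\tilde P$ is not a point mass (the degenerate case makes $\kappa$ undefined, matching the division-by-zero remark in the text). The translation to EC is then immediate by applying the identity with $\hat y^{(j)}$ replaced by the correctness indicators $r^{(j)}$, noting that equal marginals of $r^{(1)}, r^{(2)}$ is exactly equality of the two accuracies. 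I do not expect a serious obstacle here; the one point requiring care is verifying that $\hat y^{(2)}$ genuinely has marginal $\tilde P$ (so that Cohen's $\kappa$'s $p_{exp}$ is symmetric and equals $s$) and that the independence used in the $C=0$ branch is between $\hat y^{(1)}$ and the resampled response — both follow from the model's specification that the "otherwise" draw is an independent sample from the same marginal.
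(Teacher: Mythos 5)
Your proposal is correct and follows essentially the same route as the paper's proof: you compute the agreement probability under the copy model (conditioning on the copy event, which the paper expresses via the joint $P^{(1,2)}(y,y) = P^{(1)}(y)(p_{copy} + (1-p_{copy})P^{(2)}(y))$), note that the marginals coincide so $p_{exp} = \sum_y P^{(1)}(y)^2$, and substitute into \cref{eq:kappa} to cancel the factor $1 - \sum_y P^{(1)}(y)^2$. Your explicit remark about the degenerate point-mass case ($s=1$) is a small additional care point that the paper only addresses informally in the main text.
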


\begin{proof}
    The underlying distributions translate into the observed distributions $P^{(1)} = \tilde{P}^{(1)}$ and $P^{(2)} = p_{copy}\tilde{P}^{(1)} + (1-p_{copy})\tilde{P}^{(2)}$. With $\tilde{P}^{(1)} = \tilde{P}^{(2)}$ we have $P^{(1)} = P^{(2)} = \tilde{P}^{(1)} = \tilde{P}^{(2)}$. 
    
    For the joint distribution, we have $P^{(1,2)}(y, y) = \tilde{P}^{(1)}(y)(p_{copy} + (1-p_{copy})\tilde{P}^{(2)}(y))$ because the second classifier makes the same prediction as the first not only in all copy cases ($p_{copy}$), but also in those cases in which it does not copy but coincidentally makes the same prediction ($(1-p_{copy})P^{(2)}(y)$). With that, $\kappa$ simplifies to $p_{copy}$.
    \begin{align*}
        \kappa &= 
        \frac{ p_{obs} - p_{exp} }{ 1 - p_{exp} } 
        \\ &= 
        \frac{ \sum_{y=1}^K P^{(1,2)}(y, y) - \sum_{y=1}^K P^{(1)}(y) P^{(2)}(y) }{ 1 - \sum_{y=1}^K P^{(1)}(y) P^{(2)}(y) }
        \\ &= 
        \frac{\sum_{y=1}^K \Big( P^{(1)}(y) (p_{copy} + (1-p_{copy})\tilde{P}^{2}(y)) - P^{(1)}(y) P^{(2)}(y) \Big) }{ 1 - \sum_{y=1}^K P^{(1)}(y) P^{(2)}(y) }
        \\ &= 
        \frac{\sum_{y=1}^K \Big( P^{(1)}(y) (p_{copy} + (1-p_{copy})P^{(1)}(y)) - P^{(1)}(y) P^{(2)}(y) \Big) }{ 1 - \sum_{y=1}^K P^{(1)}(y) P^{(2)}(y) }
        \\ &= 
        \frac{ \sum_{y=1}^K \Big( P^{(1)}(y)p_{copy} + P^{(1)}(y)P^{(1)}(y) - P^{(1)}(y)P^{(1)}(y)p_{copy} - P^{(1)}(y)P^{(1)}(y) \Big) }{ 1 - \sum_{y=1}^K P^{(1)}(y) P^{(1)}(y) }
        \\ &= 
        p_{copy} \frac{ \sum_{y=1}^K P^{(1)}(y) - \sum_{y=1}^K P^{(1)}(y) P^{(1)}(y) }{ 1 - \sum_{y=1}^K P^{(1)}(y) P^{(1)}(y) }
        \\ &= 
        p_{copy} \frac{ 1 - \sum_{y=1}^K P^{(1)}(y)^2 }{ 1 - \sum_{y=1}^K P^{(1)}(y)^2 }
        \\ &= 
        p_{copy}
    \end{align*}
\end{proof}

Second, in the general case when the two classifiers have different accuracies, EC measures two aspects of similarity between the two classifiers:
\begin{itemize}
    \item[(1)] the proportion of copied responses ($p_{copy}$) and
    \item[(2)] the (mis-)match between their accuracies (factor $f$ in \cref{prop:hard}).
\end{itemize}
Both are intuitively valid determinants of the similarity between two classifiers. We formalize the interplay between these two aspects in \cref{prop:hard}. In contrast, \cite{safak2020min} tackled the same problem but proposed a linear scaling whereas our result reflects a more interpretable multiplicative scaling. Again, this result holds for Cohen's $\kappa$ in general and, therefore, also when it is applied to measure EC (for which $y$ would be replaced by $r$ and different marginals correspond to different accuracies). 

\begin{proposition}
\label{prop:hard}
Let a classifier give correct responses according to some marginal distribution $\hat{y}^{(1)} \sim \tilde{P}^{(1)}$. If another classifier copies that response $\hat{y}^{(2)} = \hat{y}^{(1)}$ with probability $p_{copy}$ and otherwise draws independently from its (potentially different) marginal distribution $\hat{y}^{(2)} \sim \tilde{P}^{(2)}$, then 
\begin{align*}
    \text{Cohen's }\kappa = p_{copy} \cdot 
    \underbrace{
    \frac{1 - \sum_y P^{(1)}(y)^2}{1 - \sum_y P^{(1)}(y)P^{(2)}(y)}
    }_{\text{factor $f$ due to mismatching marginals}}
\end{align*} 
with observed marginal distributions $P^{(1)} = \tilde{P}^{(1)}$ and $P^{(2)} = p_{copy}\tilde{P}^{(1)} + (1-p_{copy})\tilde{P}^{(2)}$.
\end{proposition}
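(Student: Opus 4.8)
The plan is to reuse the template of the proof of \cref{prop:simple}, but to carry the two distinct underlying marginals $\tilde{P}^{(1)}$ and $\tilde{P}^{(2)}$ through the computation instead of collapsing them into one. The first step is to compute the joint probability of agreement on a fixed label $y$. The second classifier outputs $y$ exactly when it either copies a first-classifier response that equals $y$, or does not copy but \emph{coincidentally} draws $y$ from $\tilde{P}^{(2)}$; conditioning on $\hat{y}^{(1)} = y$ therefore gives
\begin{align*}
P^{(1,2)}(y,y) = P^{(1)}(y)\big(p_{copy} + (1-p_{copy})\tilde{P}^{(2)}(y)\big).
\end{align*}
Summing over $y$ and using $\sum_y P^{(1)}(y) = 1$ yields $p_{obs} = p_{copy} + (1-p_{copy})\sum_y P^{(1)}(y)\tilde{P}^{(2)}(y)$.

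The second step is the key algebraic move: substitute the identity $(1-p_{copy})\tilde{P}^{(2)}(y) = P^{(2)}(y) - p_{copy}P^{(1)}(y)$, which follows immediately from the stated relation $P^{(2)} = p_{copy}\tilde{P}^{(1)} + (1-p_{copy})\tilde{P}^{(2)}$ together with $P^{(1)} = \tilde{P}^{(1)}$. This rewrites the observed agreement as $p_{obs} = p_{copy} + \sum_y P^{(1)}(y)P^{(2)}(y) - p_{copy}\sum_y P^{(1)}(y)^2$, i.e. $p_{obs} = p_{copy} + p_{exp} - p_{copy}\sum_y P^{(1)}(y)^2$, where $p_{exp} = \sum_y P^{(1)}(y)P^{(2)}(y)$ is precisely the expected-agreement term appearing in \cref{eq:kappa}.

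The final step is to plug into Cohen's $\kappa$: the numerator collapses to $p_{obs} - p_{exp} = p_{copy}\big(1 - \sum_y P^{(1)}(y)^2\big)$ while the denominator is $1 - p_{exp} = 1 - \sum_y P^{(1)}(y)P^{(2)}(y)$, so $\kappa = p_{copy}\cdot f$ with $f$ exactly as claimed. As consistency checks I would note that $\tilde{P}^{(2)} = \tilde{P}^{(1)}$ forces $P^{(2)} = P^{(1)}$, hence $f = 1$ and we recover \cref{prop:simple}, and that $f$ is well defined precisely when $\kappa$ itself is, namely when $p_{exp} \neq 1$.

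I do not expect a genuine obstacle, since the statement reduces to a short computation; the one place that needs care is the conditional-probability bookkeeping in the first step, where the non-copy branch must contribute its coincidental agreement $(1-p_{copy})\tilde{P}^{(2)}(y)$ without double-counting the copy branch. The substitution in the second step is what converts the mismatch factor into the clean multiplicative form $f$ rather than leaving it as an uninterpretable sum, so getting that rewrite right is the crux of the argument.
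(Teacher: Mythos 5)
Your proposal is correct and follows essentially the same route as the paper's proof: the same joint-agreement expression $P^{(1,2)}(y,y) = P^{(1)}(y)\bigl(p_{copy} + (1-p_{copy})\tilde{P}^{(2)}(y)\bigr)$, the same substitution $(1-p_{copy})\tilde{P}^{(2)}(y) = P^{(2)}(y) - p_{copy}P^{(1)}(y)$ from the observed-marginal relation, and the same cancellation in the numerator of Cohen's $\kappa$. The added consistency checks (recovering \cref{prop:simple} when $\tilde{P}^{(1)} = \tilde{P}^{(2)}$, and well-definedness when $p_{exp} \neq 1$) are sound but do not change the argument.
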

See \cref{sec:appdx_proof} for the proof. Crucially, this view of Cohen's $\kappa$ holds with generality as long as $K=2$, without limiting assumptions about the marginal distributions. Note that the relation between EC and $p_{copy}$ is not symmetric: Whether classifier A is considered to copy from classifier B or vice versa results in different mismatch factors~$f$. But note also that this factor vanishes ($f=1$) if the reference classifier, from which responses are assumed to be copied, has a uniform marginal distribution.

This inspires a strong recommendation for applications of EC: When comparing multiple DNNs with different accuracies to humans, ensure that the \emph{human responses} have uniform marginal distributions (i.e., have equally many correct as incorrect responses). In these cases, EC becomes interpretable as the proportion of responses that the DNNs copied from the human responses, $\text{EC} = p_{copy}$, regardless of the overall accuracy of the different DNNs. Without a uniform marginal distribution, DNNs with a higher-than-human accuracy may be evaluated as less similar to humans because their classification process is more accurate, not because it is functionally different. To be clear, we still believe that it is meaningful to interpret differences in accuracies between classifiers. But, ideally, EC would offer additional, orthogonal information beyond that accuracy mismatch---which is ensured by keeping the marginal distribution of the reference classifier (humans) uniform.

Beyond this contribution to interpreting EC, the generative model in \cref{prop:hard} contributes in another way: It allows simulating data to predict the stability of real data. We will heavily use the latter to give guidance to practitioners planning how many samples they should present to real human participants in their experiments.

\section{Calculating confidence intervals for EC values}
\label{sec:calculating_cis}
Whenever an aggregate measure of empirical data is reported, it is best practice to also quantify the degree of uncertainty about the final value. Historically, this has not always been done for EC, e.g. none of \cite{geirhos2018generalisation, ollikka2024humans, li2025laion} report confidence intervals. While \citep{geirhos2020beyond} provide a basic variant of confidence intervals, these were computed in a suboptimal way: They were based on the expected agreement $p_{exp}$ rather than the actual accuracies of the two classifiers, $p_1$ and $p_2$. This creates an ambiguity because, for example, ($p_1 = 0.5, p_2 = 0.5$) as well as ($p_2 = 0.5, p_1 = 0.99$) both produce the same expected agreement of $p_{exp} = 0.5$ but result in different confidence interval widths. Moreover, these confidence intervals were necessarily centered around 0, because they were constructed only for independent observers. Our approach enables calculating confidence intervals for two dependent observers by making use of the observed accuracies. We begin by outlining how confidence intervals can be obtained for existing empirical data, using a straight-forward bootstrapping approach. After that, we make use of the generative model implied by \cref{prop:hard} to derive confidence intervals ahead of time to plan sufficiently powerful experiments.

\subsection{Using existing data for bootstrapping} 

Assuming that we already have access to two sequences of binary responses, $r^{(1)}$ and $r^{(2)}$ of length $N$, we can obtain a measure of uncertainty simply by \emph{bootstrapping} \citep{efron1994introduction}. To do so, we sample $N$ elements with uniform probability and replacement from the sequences, $((r^{(1)}_i, r^{(2)}_i) \mid i \in \{1,...,N\})$, and re-calculate EC based on the sampled trials. This procedure is established in other fields using consistency measures \citep{McKenzie_1996,Vanbelle_2008}. Doing this repeatedly yields an empirical distribution of EC values, for which we can report a 95\% confidence interval by taking the central 95\% of the posterior mass, i.e. the interval $[q_{0.025}, q_{0.975}]$, where $q_p$ denotes the $p$-th quantile of the posterior distribution. Alternatively, one could report the Highest Posterior Density Interval (HDPI), which is the narrowest interval containing a certain amount (e.g. 95\%) of the posterior mass. The only parameter of this procedure is the number of bootstraps, $M$, which we typically set to $M = 10,000$, see \cref{fig:appdx_bootstrap_convergence} for a justification of this choice. For readers unfamiliar with the bootstrap approach, we provide a more detailed description in \cref{sec:appdx_bootstrap} and an explicit algorithm in \cref{alg:bootstrap_ec}.

We apply this method of estimating CIs via bootstrapping to the data from \cite{geirhos2021partial} in \cref{sec:revisiting}. In their work, several human observers (typically four to six) and 52 models were evaluated on corrupted natural images, which had to be classified into sixteen classes. There are 17 different corruptions, most of which are parameterized by a scalar intensity level, leading to multiple conditions within each corruption. Each condition contains 160 to 640 images. For every model, its EC to human observers is calculated in a hierarchical fashion: First, within each condition, the ECs to all human observers are calculated and then averaged. Then, the conditions of each corruption are averaged again, before calculating the final ``human-machine error consistency'' as an average over corruptions. To accurately reflect this procedure, we bootstrap through the entire calculation, obtaining a posterior over the final average itself and plot 95\% percentile intervals of these averages in \cref{fig:robert_model_comparison}.

\subsection{Using \cref{prop:hard} for simulations}
But what about the scenario in which no data is available yet, e.g. when planning an experiment? Without data, we first need a stochastic generative model that can produce trial sequences resulting in a target EC, given two classifiers characterized by their marginals, i.e. their accuracies. 

\cref{prop:hard} provides such a model in which the underlying marginals and copy probability can be specified to generate data. Note that, depending on the copy probability, the observed marginals will deviate from the underlying marginal: The more responses the second classifier copies from the first, the more its marginal distribution adapts. To simulate data with specific marginal distributions $P^{(1)}$ and $P^{(2)}$ as well as EC, one can simulate data with copy probability and underlying marginals 
\begin{align*}
    p_{copy}        &= \text{EC} \cdot \left( 
                        \frac{ 1 - \sum_{y=1}^K P^{(1)}(y) P^{(2)}(y) }
                             { 1 - \sum_{y=1}^K P^{(1)}(y)^2          }
                                       \right) \text{, } \\
    \tilde{P}^{(1)}(y) &= P^{(1)}(y) \text{, and} \\ 
    \tilde{P}^{(2)}(y) &= \frac{P^{(2)}(y) - p_{copy} \cdot P^{(1)}(y)}{1-p_{copy}}  \text{. }
\end{align*}
Equipped with this generative model, we can estimate confidence intervals for a pair of classifiers as follows: We first sample $N$ trials for the first classifier by drawing from a Bernoulli distribution $r^{(1)}_i \sim \tilde{P}^{(1)}$. Next, we copy $p_{copy} \cdot N$ responses from the first classifier to the second. For the remaining $(1-p_{copy}) \cdot N$ responses, we sample independently from a Bernoulli distribution from the underlying marginal $r^{(2)}_i \sim \tilde{P}^{(2)}$. This yields one data set based on which we estimate EC. Repeating this process then yields an estimate for the variability of these EC values. We provide an algorithmic description of this approach in \cref{alg:generative_ec}.

We plot the resulting confidence intervals in \cref{fig:confidence_intervals}, and include sample sizes used in the literature \citep{geirhos2021partial, ollikka2024humans, li2025laion, wiles2024revisiting} as reference points. We focus on the most severe case where $\kappa=0.5$, but discuss the effect of $\kappa$ on CI size in \cref{sec:appdx_effect_of_ec_on_ci_size}. Note that the x-axis is on a logarithmic scale and reaches values that are very difficult to meet in experiments with human observers. Even so, the width of the confidence intervals (especially that of classifiers close to floor / ceiling performance) is very large and shrinks only slowly, covering $0.2$ units of $\kappa$ even at $N=400$ trials, which is roughly the size of the difference found between humans and machines by \cite{geirhos2021partial}. Especially noteworthy is the fact that at lower $N$ and performance close to floor / ceiling, the bootstrapped ECs are biased and undershoot the true value. This happens for a simple reason: If one classifier is always correct (or always wrong) the EC will be zero irrespective of the accuracy of the other one. In the critical performance regimes and at a low number of trials, sub-sampling trial sequences for which these edge cases occur becomes likely enough to bias results.

\paragraph{Limitations.}
A core limitation of our work is that we do not present an analytic solution for confidence intervals, but rely on bootstrapping instead. This also limits the numerical accuracy of the significance tests, which cannot resolve p-value differences $\leq \frac{1}{N}$. While the analytic solutions (along the lines of \cite{Gwet_2016}, \cite{Fleiss_1969}, and \cite{Vanbelle_2024}) could improve this slightly, we believe that for the discussed purpose here, our methodology is the pragmatic choice that comes with fewer restrictions (and is certainly much better than not reporting confidence intervals at all). 

\begin{figure}
    \centering
    \includegraphics[width=\linewidth]{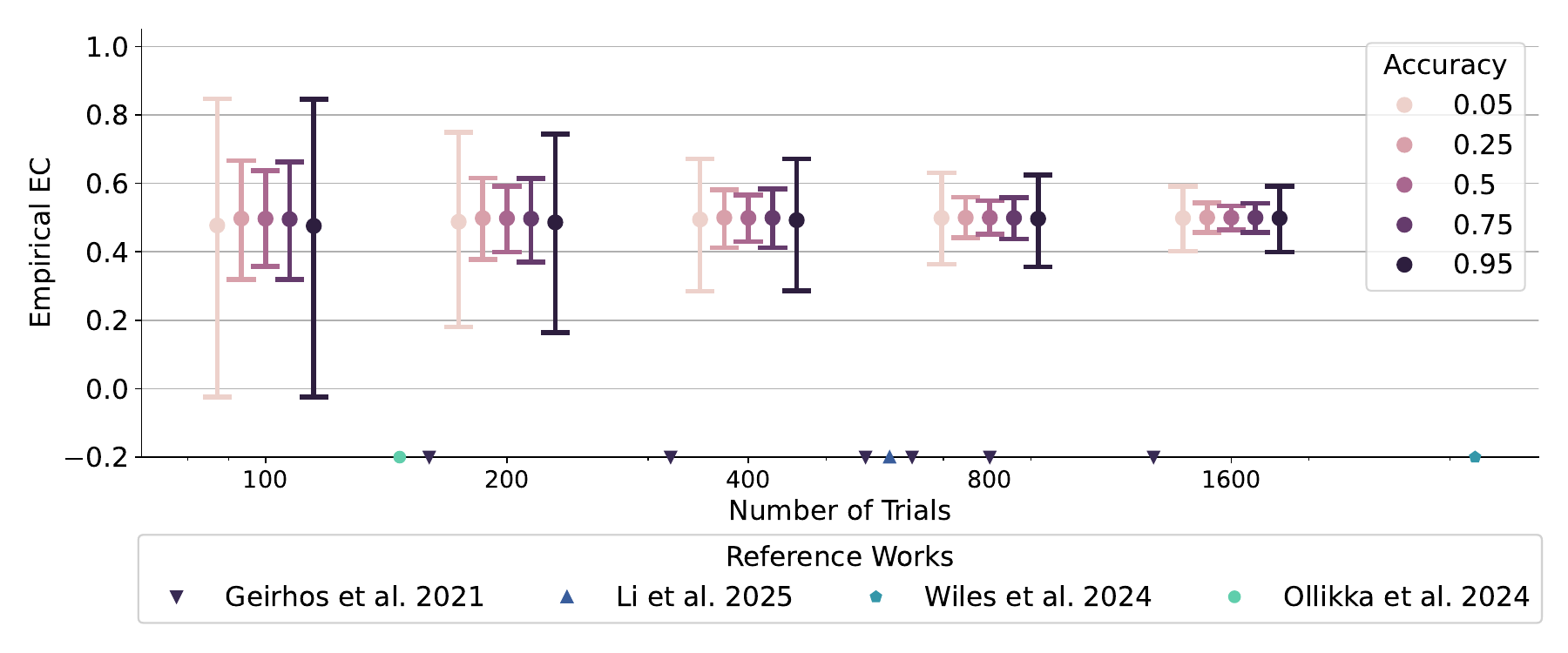}
    \caption{\textbf{ 95\% Confidence Interval sizes as a function of trial number.} We simulate data demonstrating how the size of EC confidence intervals depends on both the accuracy of the classifiers and the number of trials. The ground truth EC is 0.5, and we bootstrap 1,000 times for each pair of observers, which are set to the same accuracy for simplicity. We also include the trial numbers employed in the literature for reference (symbols on the x-axis).}
    \label{fig:confidence_intervals}
\end{figure}

\subsection{Significance Tests for Error Consistency}
Now that we can quantify the degree of uncertainty, we can also derive significance tests for error consistency. Significance tests are an invaluable tool for empirical scientists because they provide a standardized way of checking if results are statistically reliable. Given two sequences of binary trials of length $N$ which lead to an EC of $\kappa$, we want to calculate a p-value for $\kappa$. The p-value is the probability of observing an event at least as extreme as the observed one under the null hypothesis of two independent binomial observers, i.e. $\kappa = 0$.

\paragraph{Basic Significance Test.} To obtain the distribution of EC values obtained under the null hypothesis $H_0: \kappa = 0$, we first need to model the independent binomial observers that constitute the null hypothesis. They are fully characterized by their individual accuracies, of which the empirically observed accuracies provide an estimate. The reliability of this estimate depends on the number of trials. If we assume uniform priors over these accuracies, the posteriors are beta-distributions characterized by the number of (in)correctly solved trials, $\mathrm{Beta}(N - \sum_i r^{(j)}_i, \sum_i r^{(j)}_i)$.

We can now sample the two individual accuracies from their respective posteriors, sample $N$ trials from independent binomial distributions and calculate the resulting EC. Repeating this procedure $M$ times leads to a distribution of EC values expected under the null hypothesis. From this distribution, we then calculate how many times the absolute value of the simulated EC exceeded the absolute value of the empirically measured EC, for a two-sided significance test. We demonstrate this method by applying it to data from model-vs-human in \cref{sec:appdx_mvh} to obtain p-values of the error consistency between a human subject and the best-performing model, a CLIP-trained ViT \citep{radford2021learning}.

\paragraph{Advanced Significance Test.} The empirical scenario which we investigate in \cref{sec:revisiting} demands a slightly different significance test: Given a fixed reference observer (e.g. a human) and two candidate observers (e.g. two DNNs), is the difference of ECs to the reference observer found between the two candidates statistically significant?

Such a test is also possible. Let $\kappa_{c_{1}}, \kappa_{c_{2}}$ be the error consistencies of the two candidates. We consider the null hypothesis of the candidates having the same error consistency to the reference classifier, $H_0: \kappa_{c_{1}} = \kappa_{c_{2}}$. We first obtain a distribution of error consistency differences via bootstrapping as above: First, we randomly sample $N$ trial indices $i$ with replacement and uniform probability. We select the corresponding response-correctness-triplets $(r^{(ref)}_i, r^{(c_1)}_i, r^{(c_2)}_i) \in \{0,1\}^3$ and re-calculate EC values and their difference. To obtain the distribution of differences one would expect under $H_0$, we simply exchange $r^{(c_1)}_i, r^{(c_2)}_i$ with probability $p=0.5$, as if there had been only one underlying model. We can then perform a t-test between the distributions of differences to compute p-values.

\section{Revisiting earlier work}
\label{sec:revisiting}

\subsection{Model-vs-Human}
Next, we re-analyze well-known literature results by calculating confidence intervals around their empirically measured values. At the focus of our attention is the work by \cite{geirhos2021partial}, where a cohort of human observers was asked to classify corrupted natural images. The same images were also shown to 52 DNNs, to evaluate their alignment to human behavior. EC values were calculated per condition and averaged as described in \cref{sec:calculating_cis}. From our analysis in \cref{fig:confidence_intervals}, we already know that some of the error consistencies were calculated in a low-data regime that will result in large confidence intervals around the individual EC values, but we now compute confidence intervals specifically for their empirical data. We do this by bootstrapping through the entire calculation, to arrive at an empirical distribution of final scores, which we plot for humans and every model in \cref{fig:robert_model_comparison}. The best model is a CLIP-trained ViT\citep{radford2021learning} which performs much better on the distribution shifts like sketches and stylized images, achieving an EC to humans of $0.279$. The second-best model, BiTM-ResnetV2-101x1, achieves an EC of $0.251$, resulting in a difference of $0.028$, which, as a two-sided t-test confirms, is significantly different from 0 $(p < \num{1e-10})$. But note how model 24 (VGG-19 as per \cref{tab:appdx_model_names}) has overlapping CIs with all models from 8 to 45, more than 70\% of all models. It is conceivable that if these experiments were done again, VGG-19 might jump up to rank 8, or down to rank 45. It just so happens that the very best and very worst models broaden the range we have to plot, thus rendering this effect less obvious than in \cref{fig:intro}. Testing the differences for all neighboring models for significance, we find no other significant result, even without correcting for multiple comparisons.

For each bootstrap, we also obtain a ranking of models. Given that the CIs are fairly wide (for example, no model's confidence interval is separated from that of its neighbors), we wonder whether different bootstraps imply different model rankings. To check this, we calculate Kendall's $\tau$ between the original model ranking and the ranking implied by each of the 10k bootstraps. $\tau$ is a statistical measure of the rank correlation between two ordered sequences, similar to Spearman's rank correlation, but with the benefit of an intuitive interpretation: $\tau$ measures the difference between the probability that a pair of observations is concordant and the probability that it is discordant. The average $\tau$ is $0.88$, which means that two randomly sampled pairs have an 88\% chance of being concordant.
Evidently, the main conclusion drawn by \cite{geirhos2021partial} clearly still holds: Humans are more consistent among themselves than models are to humans. Nevertheless, adding confidence intervals to the reported EC values better quantifies the degree of uncertainty about the measurements for individual models, and the 95\% CIs do indeed overlap, meaning that differences between models are not resolved.

\begin{figure}
    \centering
    \includegraphics[width=\linewidth]{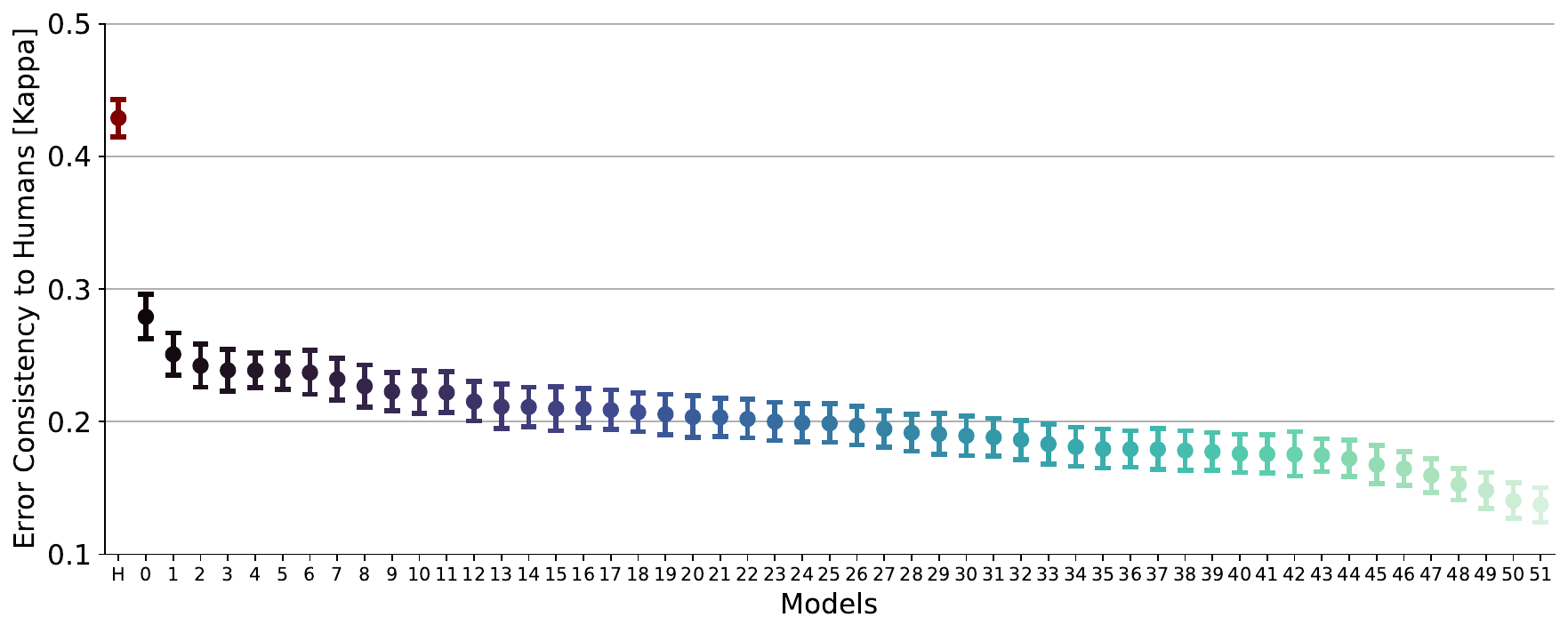}
    \caption{\textbf{Models from MvH ranked by their EC to humans.} Inter-human EC is plotted in red. We plot the 95\% percentile interval for the mean EC obtained via bootstrapping, using the standard exclusion criteria of the benchmark (orange and black points in \cref{fig:demo}b). For the mapping of model indices to names, see \cref{tab:appdx_model_names}. Note that while the gap to humans is large, confidence intervals of many models overlap.}
    \label{fig:robert_model_comparison}
\end{figure}

\subsection{Brain-Score}
\label{sec:brainscore}
One of the most widely-known applications of EC is the influential Brain-Score benchmark\footnote{\url{https://www.brain-score.org}} \citep{schrimpf2018brainscore}, which evaluates arbitrary DNNs by how much they resemble the brain. One dimension of this evaluation is the behavioral similarity to humans, including error consistency as measured via the data by \cite{geirhos2021partial}. Here we assess how reliable the ranking of models implied by their EC to humans really is. 

In Brain-Score, the EC for each condition is first divided by the human-to-human EC for this condition, to express values relative to a ceiling. We begin by reverting this operation to obtain raw EC values. To properly bootstrap EC values for these models, we would need access to the complete sequences of trials generated by the models, but this data is not stored anywhere, and re-running the models (not all of which are publicly available anyway) seems excessive. Hence, we instead make mild assumptions to estimate conservative CIs around the empirical values, demonstrating the utility of our copy-model. (See \cref{sec:appdx_brainscore} for details on this procedure.) We obtain the result depicted in \cref{fig:brainscore_ranking} for the top-30 models (by EC to humans). Again, the best model is a CLIP-trained ViT. For the other models, the CIs are so large that any of the top ten models could be the second-best one, meaning that the true rank order between models remains unclear.

We believe that the issue we raise here generalizes beyond just EC and Brain-Score: Any metric that is used to derive rankings of models in a benchmark should come with a method for calculating confidence intervals. Likewise, any benchmark that provides a ranking should quantify the degree of uncertainty surrounding individual values and the stability of the resulting ranking. 

\begin{figure}
    \centering
    \includegraphics[width=\linewidth]{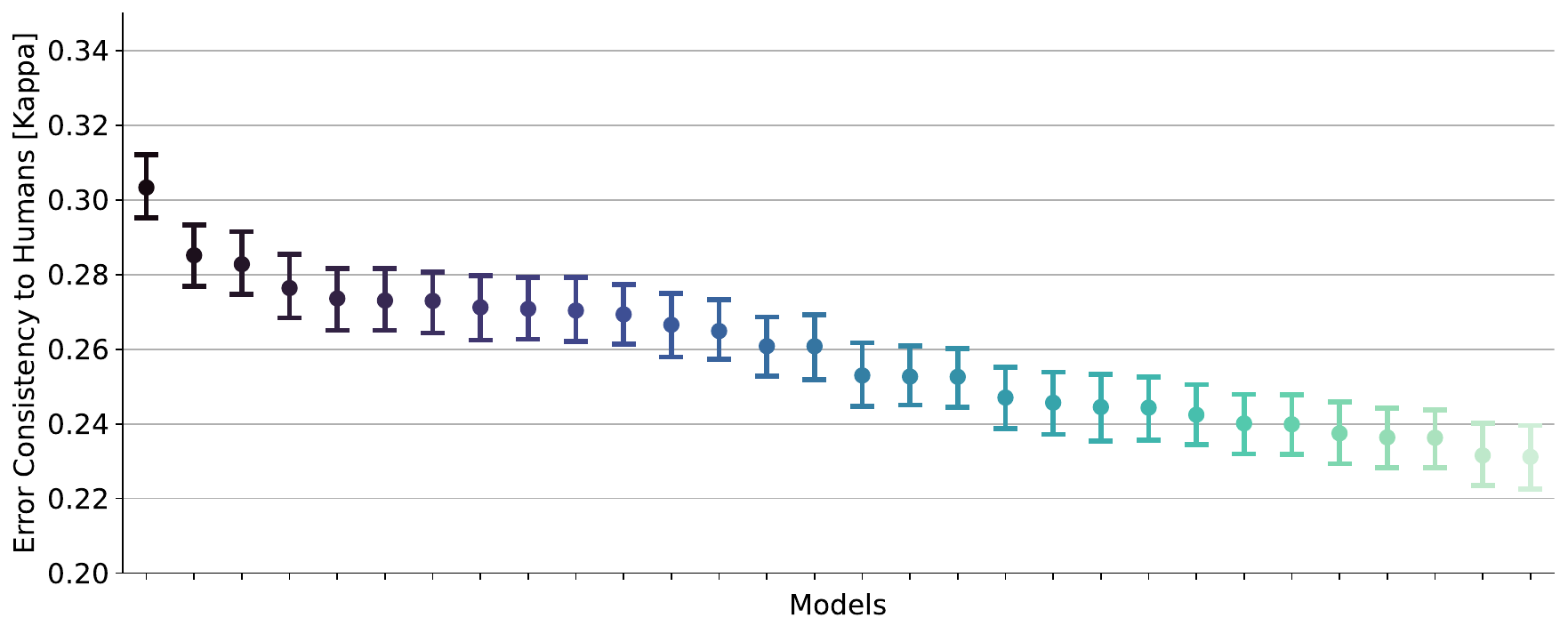}
    \caption{\textbf{Confidence Intervals for Brain-Score ECs.} Like in \cref{fig:robert_model_comparison}, we plot the 95\% confidence interval of the EC of the top 30 Brain-Score models. Note that the confidence intervals of many mid-range models overlap.}
    \label{fig:brainscore_ranking}
\end{figure}

\section{Discussion}
In this work, we have built on the error consistency metric proposed by \cite{geirhos2020beyond}, demonstrating how confidence intervals and significance tests for empirical measurements of EC can be calculated using bootstrapping techniques. Additionally, we have presented a new computational model of EC for planning sufficiently powerful, conclusive experiments. We have used our methodology to revisit influential Neuro-AI results \cite{schrimpf2018brainscore}, finding that many results are not statistically reliable, because previous studies were underpowered. In the following, we recommend best practices for future benchmarking of behavioral alignment, and facilitate their implementation by making our python package publicly available. A core contribution of our work is an improved understanding of error consistency through a new computational model: EC is broken down into two interpretable components, copy probability $p_{copy}$ and marginal mismatch $f$. As an analogy, this is similar to breaking down accuracy into true positive and false positive rates---a necessary shift in perspective that will improve interpretations in future studies measuring similarities between classifiers.

\subsection{Practical Recommendations: How to reliably benchmark behavioral alignment}
From our analysis, we can derive concrete recommendations for practitioners. As shown in \cref{fig:confidence_intervals}, a sufficiently high number of trials is required to obtain small confidence intervals around EC values. As a rule of thumb, we suggest to collect at least 1000 trials per classifier to balance this requirement against practical constraints. In experiments with human observers, achieving very high resolution of EC differences could become prohibitively expensive, underlining the need for a good selection of stimuli, which probe differences efficiently. Since the size of CIs also depends on the accuracy mismatch between classifiers, we suggest to aim for an accuracy around 75\%, but not more than 90\% to avoid ceiling-effects, which can be severe. The same holds for floor-effects. This is in agreement with standard best practices from psychophysics, where one aims for a similar accuracy level in human observers to sustain motivation whilst avoiding ceiling effects. Finally, future work should always report confidence intervals and check results for statistical significance, which can easily be done with the python package we provide.

\subsection{Conclusion}
The field of ML research relies heavily on the powerful machinery of benchmarking \citep{hardt2025emerging}. Hence, the correctness and reliability of benchmarks is crucial for progress, since ill-calibrated benchmarks will lead us astray. The field is also notorious for not quantifying uncertainty properly \citep{pineau2021improving, miller2024adding,lehmann2025quantifying,bouthillier2021accounting}, which is especially important in the context of benchmarks because of their large influence on other developments. Our work raises general questions about how benchmarks should aggregate results. For benchmarks that explicitly compute rankings, we give the concrete recommendations of (a) providing confidence intervals around individual estimates and (b) quantifying the stability of the resulting rankings as we did with error consistency. \\
In this work, we have revisited results by \cite{geirhos2021partial} which influence the Brain-Score benchmark \cite{schrimpf2018brainscore}, revealing that many differences reported between deep vision models are not statistically reliable. The error consistency metric is particularly susceptible to such issues, because it is inherently noisy, so the requirements for stable measurements are hard to meet in practice. We envision that our work will help the field to overcome these issues, by providing tools to plan sufficiently powerful experiments.

\begin{ack}
Funded, in part, by the Collaborative Research Centre (CRC) ``Robust Vision – Inference Principles and Neural Mechanisms'' of the German Research Foundation (DFG; SFB 1233), project number 276693517. 
FAW acknowledges funding by the BBVA Foundation Programme Grant ``Harnessing Vision Science to Overcome the Critical Limitations of Artificial Neural Networks''. 
This work was additionally supported by the German Federal Ministry of Education and Research (BMBF): Tübingen AI Center, FKZ: 01IS18039A. 
WB acknowledges financial support via an Emmy Noether Grant funded by the German Research Foundation (DFG) under grant no. BR 6382/1-1 and via the Open Philanthropy Foundation funded by the Good Ventures Foundation. 
WB, FAW and KM are members of the Machine Learning Cluster of Excellence, EXC number 2064/1 – Project number 390727645. 
KM is supported by the Carl Zeiss Foundation through the CZS Institute for Artificial Intelligence and Law.
This research utilized compute resources at the Tübingen Machine Learning Cloud, DFG FKZ INST 37/1057-1 FUGG.
The authors thank the International Max Planck Research School for Intelligent Systems (IMPRS-IS) for supporting Thomas Klein.

We would like to thank Robert Geirhos for helpful discussions, and for providing us with his raw data for model-vs-human. We also thank all anonymous reviewers, who provided valuable feedback during the rebuttal period, which improved the final manuscript.

\end{ack}

\clearpage

\bibliography{references}

@article{cohen1960coefficient,
 author = {Cohen, Jacob},
 journal = {Educational and psychological measurement},
 number = {1},
 pages = {37--46},
 publisher = {Sage Publications Sage CA: Thousand Oaks, CA},
 title = {A coefficient of agreement for nominal scales},
 volume = {20},
 year = {1960}
}

@inproceedings{geirhos2020beyond,
 author = {Robert Geirhos and
Kristof Meding and
Felix A. Wichmann},
 bibsource = {dblp computer science bibliography, https://dblp.org},
 biburl = {https://dblp.org/rec/conf/nips/GeirhosMW20.bib},
 booktitle = {Advances in Neural Information Processing Systems 33: Annual Conference
on Neural Information Processing Systems 2020, NeurIPS 2020, December
6-12, 2020, virtual},
 editor = {Hugo Larochelle and
Marc'Aurelio Ranzato and
Raia Hadsell and
Maria{-}Florina Balcan and
Hsuan{-}Tien Lin},
 timestamp = {Tue, 19 Jan 2021 00:00:00 +0100},
 title = {Beyond accuracy: quantifying trial-by-trial behaviour of CNNs and
humans by measuring error consistency},
 url = {https://proceedings.neurips.cc/paper/2020/hash/9f6992966d4c363ea0162a056cb45fe5-Abstract.html},
 year = {2020}
}

@inproceedings{geirhos2021partial,
 author = {Robert Geirhos and
Kantharaju Narayanappa and
Benjamin Mitzkus and
Tizian Thieringer and
Matthias Bethge and
Felix A. Wichmann and
Wieland Brendel},
 bibsource = {dblp computer science bibliography, https://dblp.org},
 biburl = {https://dblp.org/rec/conf/nips/GeirhosNMTBWB21.bib},
 booktitle = {Advances in Neural Information Processing Systems 34: Annual Conference
on Neural Information Processing Systems 2021, NeurIPS 2021, December
6-14, 2021, virtual},
 editor = {Marc'Aurelio Ranzato and
Alina Beygelzimer and
Yann N. Dauphin and
Percy Liang and
Jennifer Wortman Vaughan},
 pages = {23885--23899},
 timestamp = {Tue, 03 May 2022 01:00:00 +0200},
 title = {Partial success in closing the gap between human and machine vision},
 url = {https://proceedings.neurips.cc/paper/2021/hash/c8877cff22082a16395a57e97232bb6f-Abstract.html},
 year = {2021}
}

@article{koch2024protoscience,
 author = {Koch, Bernard J and Peterson, David},
 journal = {ArXiv preprint},
 title = {From protoscience to epistemic monoculture: How benchmarking set the stage for the deep learning revolution},
 url = {https://arxiv.org/abs/2404.06647},
 volume = {abs/2404.06647},
 year = {2024}
}

@article{thiyagalingam2022scientific,
 author = {Thiyagalingam, Jeyan and Shankar, Mallikarjun and Fox, Geoffrey and Hey, Tony},
 journal = {Nature Reviews Physics},
 number = {6},
 pages = {413--420},
 publisher = {Nature Publishing Group UK London},
 title = {Scientific machine learning benchmarks},
 volume = {4},
 year = {2022}
}

@article{bouthillier2021accounting,
 author = {Bouthillier, Xavier and Delaunay, Pierre and Bronzi, Mirko and Trofimov, Assya and Nichyporuk, Brennan and Szeto, Justin and Mohammadi Sepahvand, Nazanin and Raff, Edward and Madan, Kanika and Voleti, Vikram and others},
 journal = {Proceedings of Machine Learning and Systems},
 pages = {747--769},
 title = {Accounting for variance in machine learning benchmarks},
 volume = {3},
 year = {2021}
}

@article{sculley2018winner,
 author = {Sculley, David and Snoek, Jasper and Wiltschko, Alex and Rahimi, Ali},
 title = {Winner's curse? On pace, progress, and empirical rigor},
 url = {https://openreview.net/forum?id=rJWF0Fywf},
 year = {2018}
}

@inproceedings{alzahrani2024benchmarks,
 author = {Alzahrani, Norah and Alyahya, Hisham and Alnumay, Yazeed and Alrashed, Sultan and Alsubaie, Shaykhah and Almushayqih, Yousef and Mirza, Faisal and Alotaibi, Nouf and Al-Twairesh, Nora and Alowisheq, Areeb and others},
 booktitle = {Proceedings of the 62nd Annual Meeting of the Association for Computational Linguistics (Volume 1: Long Papers)},
 pages = {13787--13805},
 title = {When benchmarks are targets: Revealing the sensitivity of large language model leaderboards},
 year = {2024}
}

@article{lehmann2025quantifying,
 author = {Lehmann, Christoph and Paromau, Yahor},
 journal = {ArXiv preprint},
 title = {Quantifying Uncertainty and Variability in Machine Learning: Confidence Intervals for Quantiles in Performance Metric Distributions},
 url = {https://arxiv.org/abs/2501.16931},
 volume = {abs/2501.16931},
 year = {2025}
}

@inproceedings{orr2024ai,
 author = {Orr, Will and Kang, Edward B},
 booktitle = {Proceedings of the 2024 ACM Conference on Fairness, Accountability, and Transparency},
 pages = {1875--1884},
 title = {AI as a sport: On the competitive epistemologies of benchmarking},
 year = {2024}
}

@article{singh2025leaderboard,
 author = {Singh, Shivalika and Nan, Yiyang and Wang, Alex and D'Souza, Daniel and Kapoor, Sayash and {\"U}st{\"u}n, Ahmet and Koyejo, Sanmi and Deng, Yuntian and Longpre, Shayne and Smith, Noah and others},
 journal = {ArXiv preprint},
 title = {The Leaderboard Illusion},
 url = {https://arxiv.org/abs/2504.20879},
 volume = {abs/2504.20879},
 year = {2025}
}

@article{thomas2020problem,
 author = {Thomas, Rachel and Uminsky, David},
 journal = {ArXiv preprint},
 title = {The problem with metrics is a fundamental problem for AI},
 url = {https://arxiv.org/abs/2002.08512},
 volume = {abs/2002.08512},
 year = {2020}
}

@article{nado2021uncertainty,
 author = {Nado, Zachary and Band, Neil and Collier, Mark and Djolonga, Josip and Dusenberry, Michael W and Farquhar, Sebastian and Feng, Qixuan and Filos, Angelos and Havasi, Marton and Jenatton, Rodolphe and others},
 journal = {ArXiv preprint},
 title = {Uncertainty baselines: Benchmarks for uncertainty \& robustness in deep learning},
 url = {https://arxiv.org/abs/2106.04015},
 volume = {abs/2106.04015},
 year = {2021}
}

@article{schrimpf2018brainscore,
 author = {Schrimpf, Martin and Kubilius, Jonas and Hong, Ha and Majaj, Najib J and Rajalingham, Rishi and Issa, Elias B and Kar, Kohitij and Bashivan, Pouya and Prescott-Roy, Jonathan and Geiger, Franziska and others},
 journal = {BioRxiv},
 pages = {407007},
 publisher = {Cold Spring Harbor Laboratory},
 title = {Brain-score: Which artificial neural network for object recognition is most brain-like?},
 year = {2018}
}

@article{russakovsky2015imagenet,
 author = {Russakovsky, Olga and Deng, Jia and Su, Hao and Krause, Jonathan and Satheesh, Sanjeev and Ma, Sean and Huang, Zhiheng and Karpathy, Andrej and Khosla, Aditya and Bernstein, Michael and others},
 journal = {International journal of computer vision},
 pages = {211--252},
 publisher = {Springer},
 title = {Imagenet large scale visual recognition challenge},
 volume = {115},
 year = {2015}
}

@article{wichmann2023deep,
 author = {Wichmann, Felix A and Geirhos, Robert},
 journal = {Annual Review of Vision Science},
 publisher = {Annual Reviews},
 title = {Are Deep Neural Networks Adequate Behavioral Models of Human Visual Perception?},
 volume = {9},
 year = {2023}
}

@article{yamins2014performance,
 author = {Yamins, Daniel LK and Hong, Ha and Cadieu, Charles F and Solomon, Ethan A and Seibert, Darren and DiCarlo, James J},
 journal = {Proceedings of the national academy of sciences},
 number = {23},
 pages = {8619--8624},
 publisher = {National Acad Sciences},
 title = {Performance-optimized hierarchical models predict neural responses in higher visual cortex},
 volume = {111},
 year = {2014}
}

@inproceedings{geirhos2018generalisation,
 author = {Robert Geirhos and
Carlos R. Medina Temme and
Jonas Rauber and
Heiko H. Sch{\"{u}}tt and
Matthias Bethge and
Felix A. Wichmann},
 bibsource = {dblp computer science bibliography, https://dblp.org},
 biburl = {https://dblp.org/rec/conf/nips/GeirhosTRSBW18.bib},
 booktitle = {Advances in Neural Information Processing Systems 31: Annual Conference
on Neural Information Processing Systems 2018, NeurIPS 2018, December
3-8, 2018, Montr{\'{e}}al, Canada},
 editor = {Samy Bengio and
Hanna M. Wallach and
Hugo Larochelle and
Kristen Grauman and
Nicol{\`{o}} Cesa{-}Bianchi and
Roman Garnett},
 pages = {7549--7561},
 timestamp = {Thu, 21 Jan 2021 00:00:00 +0100},
 title = {Generalisation in humans and deep neural networks},
 url = {https://proceedings.neurips.cc/paper/2018/hash/0937fb5864ed06ffb59ae5f9b5ed67a9-Abstract.html},
 year = {2018}
}

@article{ollikka2024humans,
 author = {Ollikka, Netta and Abbas, Amro and Perin, Andrea and Kilpel{\"a}inen, Markku and Deny, St{\'e}phane},
 journal = {ArXiv preprint},
 title = {Humans Beat Deep Networks at Recognizing Objects in Unusual Poses, Given Enough Time},
 url = {https://arxiv.org/abs/2402.03973},
 volume = {abs/2402.03973},
 year = {2024}
}

@article{green1963consistency,
 author = {Green, David M},
 journal = {The Journal of the Acoustical Society of America},
 number = {5\_Supplement},
 pages = {788--788},
 publisher = {AIP Publishing},
 title = {Consistency of Auditory-Detection Judgments},
 volume = {35},
 year = {1963}
}

@article{goel2025great,
 author = {Goel, Shashwat and Struber, Joschka and Auzina, Ilze Amanda and Chandra, Karuna K and Kumaraguru, Ponnurangam and Kiela, Douwe and Prabhu, Ameya and Bethge, Matthias and Geiping, Jonas},
 journal = {ArXiv preprint},
 title = {Great Models Think Alike and this Undermines AI Oversight},
 url = {https://arxiv.org/abs/2502.04313},
 volume = {abs/2502.04313},
 year = {2025}
}

@article{doerig2023neuroconnectionist,
 author = {Doerig, Adrien and Sommers, Rowan P and Seeliger, Katja and Richards, Blake and Ismael, Jenann and Lindsay, Grace W and Kording, Konrad P and Konkle, Talia and Van Gerven, Marcel AJ and Kriegeskorte, Nikolaus and others},
 journal = {Nature Reviews Neuroscience},
 number = {7},
 pages = {431--450},
 publisher = {Nature Publishing Group UK London},
 title = {The neuroconnectionist research programme},
 volume = {24},
 year = {2023}
}

@inproceedings{li2025laion,
 author = {Li, Fanfei and Klein, Thomas and Brendel, Wieland and Geirhos, Robert and Zimmermann, Roland},
 booktitle = {ICLR 2025 Workshop on Spurious Correlation and Shortcut Learning: Foundations and Solutions},
 title = {LAION-C: An out-of-distribution Benchmark for Web-Scale Vision Models},
 year = {2025}
}

@book{efron1994introduction,
 author = {Efron, Bradley and Tibshirani, Robert J},
 publisher = {Chapman and Hall/CRC},
 title = {An introduction to the bootstrap},
 year = {1994}
}

@article{Falotico_2014,
 author = {Falotico, Rosa and Quatto, Piero},
 doi = {10.1007/s11135-014-0003-1},
 journal = {Quality \& Quantity},
 number = {2},
 pages = {463–470},
 publisher = {Springer Science and Business Media LLC},
 title = {Fleiss’ kappa statistic without paradoxes},
 url = {http://dx.doi.org/10.1007/s11135-014-0003-1},
 volume = {49},
 year = {2014}
}

@misc{safak2020min,
 author = {Safak, Veli},
 journal = {ArXiv preprint},
 title = {Min-Mid-Max Scaling, Limits of Agreement, and Agreement Score},
 url = {https://arxiv.org/abs/2006.12904},
 volume = {abs/2006.12904},
 year = {2020}
}

@article{Grant2016,
 author = {Grant, Malcolm J. and Button, Cathryn M. and Snook, Brent},
 doi = {https://doi.org/10.1177/0146621616684584 },
 journal = {Applied Psychological Measurement},
 number = {4},
 pages = {264-276},
 title = {An Evaluation of Interrater Reliability Measures on Binary Tasks Using d-Prime},
 volume = {41},
 year = {2016}
}

@article{Bowers2022,
 author = {Bowers, Jeffrey S. and Malhotra, Gaurav and Dujmović, Marin and Llera Montero, Milton and Tsvetkov, Christian and Biscione, Valerio and Puebla, Guillermo and Adolfi, Federico and Hummel, John E. and Heaton, Rachel F. and Evans, Benjamin D. and Mitchell, Jeffrey and Blything, Ryan},
 doi = {https://doi.org/10.1017/s0140525x22002813 },
 journal = {Behavioral and Brain Sciences},
 title = {Deep problems with neural network models of human vision},
 volume = {46},
 year = {2022}
}

@article{Mart_n_Andr_s_2024,
 author = {Martín Andrés, A. and Álvarez Hernández, M.},
 doi = {10.1007/s11634-024-00581-x},
 journal = {Advances in Data Analysis and Classification},
 number = {1},
 pages = {177–207},
 publisher = {Springer Science and Business Media LLC},
 title = {Estimators of various kappa coefficients based on the unbiased estimator of the expected index of agreements},
 url = {http://dx.doi.org/10.1007/s11634-024-00581-x},
 volume = {19},
 year = {2024}
}

@article{Gwet_2016,
 author = {Gwet, Kilem L.},
 doi = {10.1177/0013164415596420},
 journal = {Educational and Psychological Measurement},
 number = {4},
 pages = {609–637},
 publisher = {SAGE Publications},
 title = {Testing the Difference of Correlated Agreement Coefficients for Statistical Significance},
 url = {http://dx.doi.org/10.1177/0013164415596420},
 volume = {76},
 year = {2016}
}

@article{Vanbelle_2008,
 author = {Vanbelle, S. and Albert, A.},
 doi = {10.1080/00949650701410249},
 journal = {Journal of Statistical Computation and Simulation},
 number = {11},
 pages = {1009–1015},
 publisher = {Informa UK Limited},
 title = {A bootstrap method for comparing correlated kappa coefficients},
 url = {http://dx.doi.org/10.1080/00949650701410249},
 volume = {78},
 year = {2008}
}

@article{McKenzie_1996,
 author = {McKenzie, Dean P. and Mackinnon, Andrew J. and Péladeau, Normand and Onghena, Patrick and Bruce, Peter C. and Clarke, David M. and Harrigan, Susan and McGorry, Patrick D.},
 doi = {10.1016/s0022-3956(96)00033-7},
 journal = {Journal of Psychiatric Research},
 number = {6},
 pages = {483–492},
 publisher = {Elsevier BV},
 title = {Comparing correlated kappas by resampling: Is one level of agreement significantly different from another?},
 url = {http://dx.doi.org/10.1016/S0022-3956(96)00033-7},
 volume = {30},
 year = {1996}
}

@article{Fleiss_1969,
 author = {Fleiss, Joseph L. and Cohen, Jacob and Everitt, B. S.},
 doi = {10.1037/h0028106},
 journal = {Psychological Bulletin},
 number = {5},
 pages = {323–327},
 publisher = {American Psychological Association (APA)},
 title = {Large sample standard errors of kappa and weighted kappa.},
 url = {http://dx.doi.org/10.1037/h0028106},
 volume = {72},
 year = {1969}
}

@article{Vanbelle_2024,
 author = {Vanbelle, Sophie and Engelhart, ChristinaÂ Hernandez and Blix, Ellen},
 doi = {10.1186/s12874-024-02431-y},
 journal = {BMC Medical Research Methodology},
 number = {1},
 publisher = {Springer Science and Business Media LLC},
 title = {A comprehensive guide to study the agreement and reliability of multi-observer ordinal data},
 url = {http://dx.doi.org/10.1186/s12874-024-02431-y},
 volume = {24},
 year = {2024}
}

@misc{hardt2025emerging,
 author = {Moritz Hardt},
 howpublished = {Online at \url{https://mlbenchmarks.org}},
 note = {Manuscript},
 title = {The Emerging Science of Machine Learning Benchmarks},
 year = {2025}
}

@inproceedings{parthasarathy2023self,
 author = {Nikhil Parthasarathy and
S. M. Ali Eslami and
Jo{\~{a}}o Carreira and
Olivier J. H{\'{e}}naff},
 bibsource = {dblp computer science bibliography, https://dblp.org},
 biburl = {https://dblp.org/rec/conf/nips/ParthasarathyEC23.bib},
 booktitle = {Advances in Neural Information Processing Systems 36: Annual Conference
on Neural Information Processing Systems 2023, NeurIPS 2023, New Orleans,
LA, USA, December 10 - 16, 2023},
 editor = {Alice Oh and
Tristan Naumann and
Amir Globerson and
Kate Saenko and
Moritz Hardt and
Sergey Levine},
 timestamp = {Wed, 15 May 2024 01:00:00 +0200},
 title = {Self-supervised video pretraining yields robust and more human-aligned
visual representations},
 url = {http://papers.nips.cc/paper\_files/paper/2023/hash/cf57022dff0929796f85ac99d7cefa86-Abstract-Conference.html},
 year = {2023}
}

@article{wiles2024revisiting,
 author = {Wiles, Olivia and Zhang, Chuhan and Albuquerque, Isabela and Kaji{\'c}, Ivana and Wang, Su and Bugliarello, Emanuele and Onoe, Yasumasa and Papalampidi, Pinelopi and Ktena, Ira and Knutsen, Chris and others},
 journal = {ArXiv preprint},
 title = {Revisiting text-to-image evaluation with gecko: On metrics, prompts, and human ratings},
 url = {https://arxiv.org/abs/2404.16820},
 volume = {abs/2404.16820},
 year = {2024}
}

@article{kriegeskorte2015deep,
 author = {Kriegeskorte, Nikolaus},
 journal = {Annual review of vision science},
 pages = {417--446},
 publisher = {Annual Reviews},
 title = {Deep neural networks: a new framework for modeling biological vision and brain information processing},
 volume = {1},
 year = {2015}
}

@article{firestone2020performance,
 author = {Firestone, Chaz},
 journal = {Proceedings of the National Academy of Sciences},
 number = {43},
 pages = {26562--26571},
 publisher = {National Acad Sciences},
 title = {Performance vs. competence in human--machine comparisons},
 volume = {117},
 year = {2020}
}

@article{cichy2019deep,
 author = {Cichy, Radoslaw M and Kaiser, Daniel},
 journal = {Trends in cognitive sciences},
 number = {4},
 pages = {305--317},
 publisher = {Elsevier},
 title = {Deep neural networks as scientific models},
 volume = {23},
 year = {2019}
}

@article{maniquet2024recurrent,
 author = {Maniquet, Tim and de Beeck, Hans Op and Costantino, Andrea Ivan},
 journal = {bioRxiv},
 pages = {2024--04},
 publisher = {Cold Spring Harbor Laboratory},
 title = {Recurrent issues with deep neural network models of visual recognition},
 year = {2024}
}

@article{kietzmann2017deep,
 author = {Kietzmann, Tim C and McClure, Patrick and Kriegeskorte, Nikolaus},
 journal = {BioRxiv},
 pages = {133504},
 publisher = {Cold Spring Harbor Laboratory},
 title = {Deep neural networks in computational neuroscience},
 year = {2017}
}

@article{serre2019deep,
 author = {Serre, Thomas},
 journal = {Annual review of vision science},
 pages = {399--426},
 publisher = {Annual Reviews},
 title = {Deep learning: the good, the bad, and the ugly},
 volume = {5},
 year = {2019}
}

@article{lonnqvist2021comparative,
 author = {Lonnqvist, Ben and Bornet, Alban and Doerig, Adrien and Herzog, Michael H},
 journal = {Journal of vision},
 number = {10},
 pages = {17--17},
 publisher = {The Association for Research in Vision and Ophthalmology},
 title = {A comparative biology approach to DNN modeling of vision: A focus on differences, not similarities},
 volume = {21},
 year = {2021}
}

@inproceedings{kubilius2019brain,
 author = {Jonas Kubilius and
Martin Schrimpf and
Ha Hong and
Najib J. Majaj and
Rishi Rajalingham and
Elias B. Issa and
Kohitij Kar and
Pouya Bashivan and
Jonathan Prescott{-}Roy and
Kailyn Schmidt and
Aran Nayebi and
Daniel Bear and
Daniel L. Yamins and
James J. DiCarlo},
 bibsource = {dblp computer science bibliography, https://dblp.org},
 biburl = {https://dblp.org/rec/conf/nips/KubiliusSHMRIKB19.bib},
 booktitle = {Advances in Neural Information Processing Systems 32: Annual Conference
on Neural Information Processing Systems 2019, NeurIPS 2019, December
8-14, 2019, Vancouver, BC, Canada},
 editor = {Hanna M. Wallach and
Hugo Larochelle and
Alina Beygelzimer and
Florence d'Alch{\'{e}}{-}Buc and
Emily B. Fox and
Roman Garnett},
 pages = {12785--12796},
 timestamp = {Thu, 21 Jan 2021 00:00:00 +0100},
 title = {Brain-Like Object Recognition with High-Performing Shallow Recurrent
ANNs},
 url = {https://proceedings.neurips.cc/paper/2019/hash/7813d1590d28a7dd372ad54b5d29d033-Abstract.html},
 year = {2019}
}

@article{zhuang2021unsupervised,
 author = {Zhuang, Chengxu and Yan, Siming and Nayebi, Aran and Schrimpf, Martin and Frank, Michael C and DiCarlo, James J and Yamins, Daniel LK},
 journal = {Proceedings of the National Academy of Sciences},
 number = {3},
 pages = {e2014196118},
 publisher = {National Acad Sciences},
 title = {Unsupervised neural network models of the ventral visual stream},
 volume = {118},
 year = {2021}
}

@article{pineau2021improving,
 author = {Joelle Pineau and
Philippe Vincent{-}Lamarre and
Koustuv Sinha and
Vincent Larivi{\`{e}}re and
Alina Beygelzimer and
Florence d'Alch{\'{e}}{-}Buc and
Emily B. Fox and
Hugo Larochelle},
 bibsource = {dblp computer science bibliography, https://dblp.org},
 biburl = {https://dblp.org/rec/journals/jmlr/PineauVSLBdFL21.bib},
 journal = {J. Mach. Learn. Res.},
 pages = {164:1--164:20},
 timestamp = {Mon, 31 Jan 2022 00:00:00 +0100},
 title = {Improving Reproducibility in Machine Learning Research(A Report from
the NeurIPS 2019 Reproducibility Program)},
 url = {http://jmlr.org/papers/v22/20-303.html},
 volume = {22},
 year = {2021}
}

@article{miller2024adding,
 author = {Miller, Evan},
 journal = {ArXiv preprint},
 title = {Adding error bars to evals: A statistical approach to language model evaluations},
 url = {https://arxiv.org/abs/2411.00640},
 volume = {abs/2411.00640},
 year = {2024}
}

@inproceedings{radford2021learning,
 author = {Alec Radford and
Jong Wook Kim and
Chris Hallacy and
Aditya Ramesh and
Gabriel Goh and
Sandhini Agarwal and
Girish Sastry and
Amanda Askell and
Pamela Mishkin and
Jack Clark and
Gretchen Krueger and
Ilya Sutskever},
 bibsource = {dblp computer science bibliography, https://dblp.org},
 biburl = {https://dblp.org/rec/conf/icml/RadfordKHRGASAM21.bib},
 booktitle = {Proceedings of the 38th International Conference on Machine Learning,
{ICML} 2021, 18-24 July 2021, Virtual Event},
 editor = {Marina Meila and
Tong Zhang},
 pages = {8748--8763},
 publisher = {{PMLR}},
 series = {Proceedings of Machine Learning Research},
 timestamp = {Wed, 25 Aug 2021 01:00:00 +0200},
 title = {Learning Transferable Visual Models From Natural Language Supervision},
 url = {http://proceedings.mlr.press/v139/radford21a.html},
 volume = {139},
 year = {2021}
}
\bibliographystyle{plainnat}

\clearpage

%%%%%%%%%%%%%%%%%%%%%%%%%%%%%%%%%%%%%%%%%%%%%%%%%%%%%%%%%%%%

\appendix

\section{Related Work}
\label{sec:appdx_related_work}

\paragraph{DNNs as vision models.}
Within the last decade, Deep Neural Networks (DNNs) have been proposed as models of the human visual system \citep{doerig2023neuroconnectionist, kriegeskorte2015deep, cichy2019deep, kietzmann2017deep}, not only because they form the only class of image-computable models that achieves human-level performance on benchmark tasks like ImageNet \citep{russakovsky2015imagenet}, but also because performance-optimized DNNs are the best predictors of biological neural activations \citep{yamins2014performance, kubilius2019brain, zhuang2021unsupervised}. However, the suitability of DNNs as models of vision is debated \citep{Bowers2022, wichmann2023deep, maniquet2024recurrent, serre2019deep}, as is the question of how comparisons should best be conducted \citep{firestone2020performance, lonnqvist2021comparative, wichmann2023deep}.

\paragraph{Error Consistency.}
The underlying idea motivating error consistency is that of ``molecular psychophysics'' \citep{green1963consistency}, which is to go beyond aggregate measures like accuracy, and instead compare behavior on a trial-by-trial basis. EC promises to achieve this by building on Cohen's $\kappa$ \citep{cohen1960coefficient,Mart_n_Andr_s_2024}, which considers the trial-level agreement observed for a pair of classifiers. \cite{geirhos2020beyond} proposed EC in the context of human-machine comparison, but in principle, the method applies to the general setting of comparing arbitrary classifiers. \cite{Mart_n_Andr_s_2024} observe that $\kappa$ is a biased estimator and should be corrected, which we support in our implementation. \cite{safak2020min} also observe the problem of $\kappa's$ dependency on the marginals, which we outline in \cref{sec:explanation}. They propose to scale $\kappa$ between its theoretical limits in an attempt to obtain a measure of error consistency that is orthogonal to accuracy. \cite{goel2025great} propose to apply Cohen's $\kappa$ directly to the responses given by both classifiers $\hat{y}_i$ rather than the correctness values $r_i$, and adapt the marginal probabilities of independent classifiers to reflect their accuracy. We still focus on the standard error consistency proposed by \cite{geirhos2020beyond} because it has seen the widest application. In principle, our bootstrapping methodology also works for these other metrics.

\paragraph{Uncertainties in Benchmarking.}
Benchmarking has a long history and is central to modern machine learning \citep{koch2024protoscience, singh2025leaderboard, hardt2025emerging, orr2024ai}. For example, the popular Papers with Code database\footnote{\url{https://paperswithcode.com/sota}} lists approximately 14,000 different benchmarks.
With the emergence of modern benchmarks and leaderboards, issues such as overfitting to benchmarks \citep{singh2025leaderboard}, the limitations of metrics \citep{thomas2020problem}, and various sources of uncertainty have been identified and challenged \citep{lehmann2025quantifying, bouthillier2021accounting}. In general, it has been criticized that benchmarks should not be the primary goal of scientific machine learning research \citep{alzahrani2024benchmarks}. At the same time, it has been pointed out that benchmarking itself requires a scientific approach and some kind of standard methodology \citep{thiyagalingam2022scientific, sculley2018winner, hardt2025emerging}. Similar to our work, \cite{nado2021uncertainty} argued that uncertainty is an important aspect to consider for benchmarks. In contrast to previous work, we focus on uncertainty in the behavioral comparison of classifiers using error consistency.

\section{Proof of \cref{prop:hard}}
\label{sec:appdx_proof}

\begin{proof}
    The underlying distributions translate into the observed distributions $P_1 = \tilde{P}_1$ and $P^{(2)} = p_{copy}\tilde{P}^{(1)} + (1-p_{copy})\tilde{P}^{(2)}$. Inverting the marginal $P^{(2)}$, we get $\tilde{P}^{(2)}(y) = \frac{P^{(2)}(y) - p_{copy}\cdot P^{(1)}(y)}{1-p_{copy}}$.
    
    For the joint distribution, we have $P(y, y) = \tilde{P}^{(1)}(y)(p_{copy} + (1-p_{copy})\tilde{P}_2(y))$ because the second classifier makes the same prediction as the first in all copy cases ($p_{copy}$) and in those cases, in which it does not copy but coincidentally makes the same prediction ($(1-p_{copy})P^{(2)}(y)$). 
    
    \begin{align*}
        \kappa 
        &= \frac{ p_{obs} - p_{exp} }{ 1 - p_{exp} } 
        \\ 
        &= \frac{ \sum_{y=1}^K P^{(1,2)}(y, y) - \sum_{y=1}^K P^{(1)}(y) P^{(2)}(y) }{ 1 - \sum_{y=1}^K P^{(1)}(y) P^{(2)}(y) }
        \\ 
        &= \frac{\sum_{y=1}^K \Big( P^{(1)}(y) (p_{copy} + (1-p_{copy})\tilde{P}_2(y)) \Big) - \sum_{y=1}^K P^{(1)}(y) P^{(2)}(y) }{ 1 - \sum_{y=1}^K P^{(1)}(y) P^{(2)}(y) }
        \\ 
        &= \frac{ \sum_{y=1}^K \Big( P^{(1)}(y) ( p_{copy} + P^{(2)}(y) - p_{copy}\cdot P^{(1)}(y) ) \Big)  - \sum_{y=1}^K P^{(1)}(y) P^{(2)}(y) }{ 1 - \sum_{y=1}^K P^{(1)}(y) P^{(2)}(y) }
        \\ 
        &= p_{copy} \frac{ \sum_{y=1}^K \Big( P^{(1)}(y) - P^{(1)}(y) \cdot P^{(1)}(y) ) \Big) }{ 1 - \sum_{y=1}^K P^{(1)}(y) P^{(2)}(y) } 
        \\
        &= p_{copy} \cdot \left( \frac{ 1 - \sum_{y=1}^K P^{(1)}(y)^2 }{ 1 - \sum_{y=1}^K P^{(1)}(y) P^{(2)}(y) } \right)
    \end{align*}
\end{proof}

\section{Proof of Ceiling Effects}
\label{sec:appdx_proof_ceiling}

In \cref{sec:explanation}, we claim that as soon as one classifier achieves perfect accuracy on the investigated sequence of trials, the EC of the pair will be 0 irrespective of the accuracy of the other one. Here, we proof this claim, which is easiest to derive from a confusion matrix as shown in \cref{tab:matrix}. The observed agreement is found on the diagonal, $p_{obs} = a + d$,  while the expected agreement is what would be on the diagonal if observers were independent ($p_{exp} = (a+b)(a+c) + (c+d)(b+d)$). For the proof, consider without loss of generality that classifier 2 was perfect, i.e. $a + c = 1$.

\begin{table}[h]
  %\vspace{-1em}
  \setlength{\tabcolsep}{8pt} % column separation
  \renewcommand{\arraystretch}{1.3} % row height
  \centering
  \begin{tabular}{cccc}
    & $r^{(2)} = 1$ & $r^{(2)} = 0$ & \\
    %\hline \\
    $r^{(1)} = 1$ & \cellcolor{gray!20}$a$ & $b$                    & $a+b$ \\
    $r^{(1)} = 0$ & $c$                    & \cellcolor{gray!20}$d$ & $c+d$ \\
    %\hline \\
    & $a+c$ & $b+d$ & $1.0$ \\
  \end{tabular}
  \caption{Error consistency matrix.}
  \label{tab:matrix}
\end{table}

\begin{proof}
Let $a + c = 1$, then $b + d = 0$, so $p_{exp} = (a+b)(a+c) + (c+d)(b+d) = (a + b)$. Since $b \geq 0$ and $d \geq 0$, $b = d = 0$.
    \begin{align*}
        \kappa 
        &= \frac{ p_{obs} - p_{exp} }{ 1 - p_{exp} } 
        \\ 
        &= \frac{(a+d) - (a + b)}{1 - (a + b)} \\
        &= \frac{d - b}{1 - (a + b)} \\
        &= 0 \\
    \end{align*}
\end{proof}

By symmetry, the same argument holds for the case where one classifier gives incorrect responses on all trials. 

% \begin{wraptable}{r}{0.45\textwidth}
%   \vspace{-1em}
%   \setlength{\tabcolsep}{8pt} % column separation
%   \renewcommand{\arraystretch}{1.3} % row height
%   \centering
%   \begin{tabular}{cccc}
%     & $r^{(2)} = 1$ & $r^{(2)} = 0$ & \\
%     %\hline \\
%     $r^{(1)} = 1$ & \cellcolor{gray!20}$0.4$ & $0.1$                    & $0.5$ \\
%     $r^{(1)} = 0$ & $0.3$                    & \cellcolor{gray!20}$0.2$ & $0.5$ \\
%     %\hline \\
%     & $0.7$ & $0.3$ & $1.0$ \\
%   \end{tabular}
%   \caption{Example error consistency matrix. Here, the observed agreement, $p_{obs} = 0.6$, is somewhat larger than the expected agreement under independence, $p_{exp} = 0.5$. This results in a moderate error consistency $\text{EC} = 0.2$. }
%   \label{tab:matrix}
% \end{wraptable}

\section{Lower bound on EC}

Figure~\ref{fig:demo}b showed how the mismatch in accuracies between two classifiers imposes upper bounds on EC. In extension, Figure~\ref{fig:demo_lower} shows that there are also non-trivial lower bounds (subplot b). These lower bounds are not a simple function of the upper bounds. Instead, they are inversely related such that when accuracies are inverted (i.e., $p_1 = 1 - p_2$ along the off-diagonal) lower values can be attained. But perhaps counterintuitively, even with inverted accuracies, $\text{EC} = -1$ can not always be reached. This is in contrast to the upper bound, where aligned accuracies (i.e., $p_1 = p_2$) always allows $\text{EC} = +1$.

\begin{figure}[!h]
    \centering
    \includegraphics[width=0.9\linewidth]{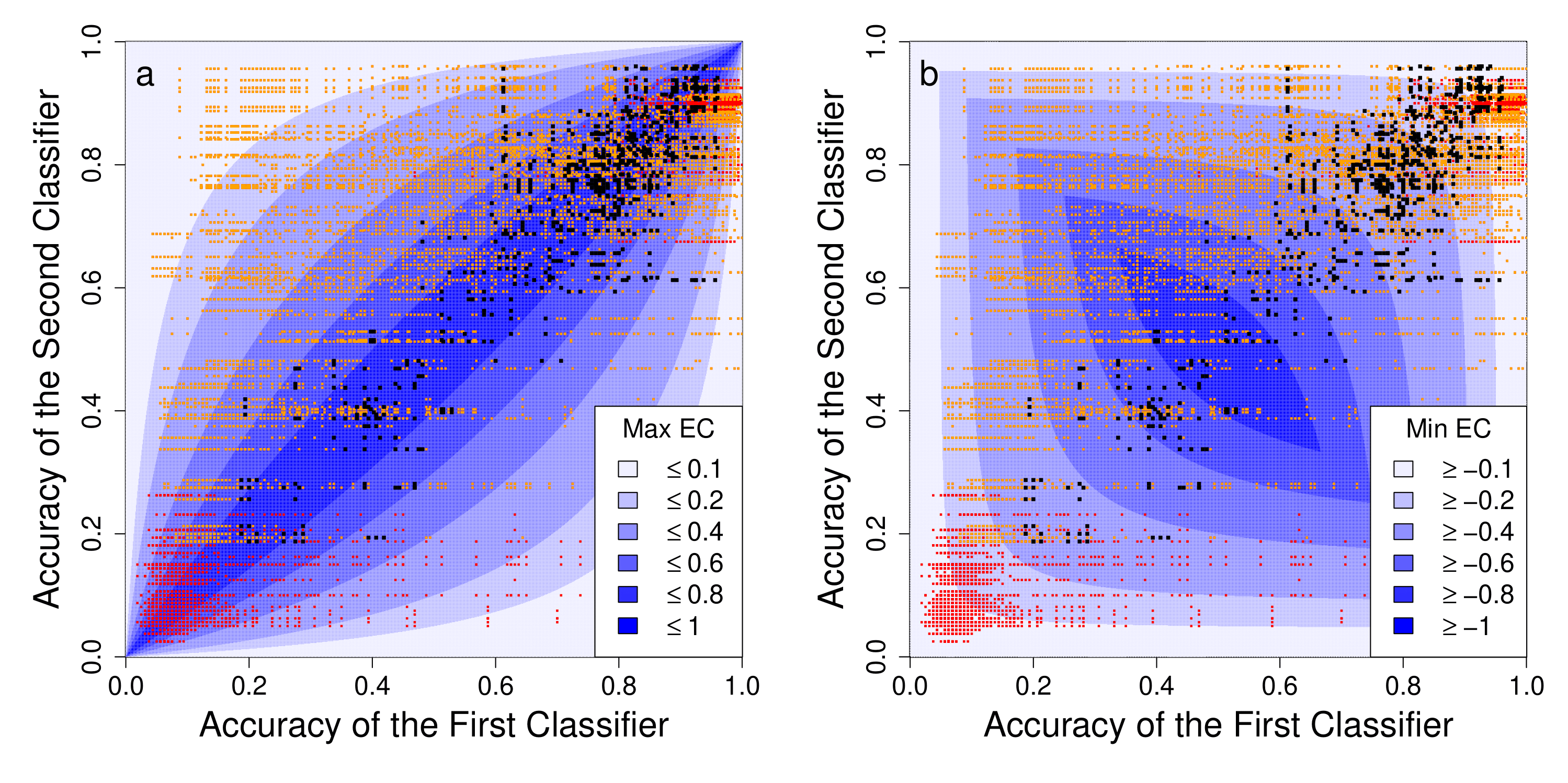}
    \caption{\textbf{Lower bound of EC in comparison to upper bounds}\textbf{.} The left subplot (a) is the same as \ref{fig:demo}b, showing regions in which EC is upper bounded by the mismatch in accuracies two classifiers (x and y axes). Additionally, the right subplot (b) shows the lower bounds. Again, each point represents a model-vs-human comparison in the analysis of \cite{geirhos2021partial}.}
    \label{fig:demo_lower}
\end{figure}

\section{Effect of EC on CI size}
\label{sec:appdx_effect_of_ec_on_ci_size}
For \cref{fig:confidence_intervals}, we have chosen a ground-truth EC of 0.5 to demonstrate how the EC between two classifiers depends on their accuracy and the number of trials on which EC is measured. However, the choice of the ground-truth EC influences the CI size as well, which we demonstrate in \cref{fig:appdx_ci_sizes}. We again generate data using our copy-model, and plot the ground-truth EC (input of the model) against the distribution of empirical ECs (calculated on model outputs). To better show the changes in CI size, we plot the delta between empirical and ground-truth EC rather than the absolute EC. Note how the CI bounds are not symmetric and largest at a ground-truth EC of 0.5. 

\begin{figure}[h]
    \centering
    \includegraphics[width=\linewidth]{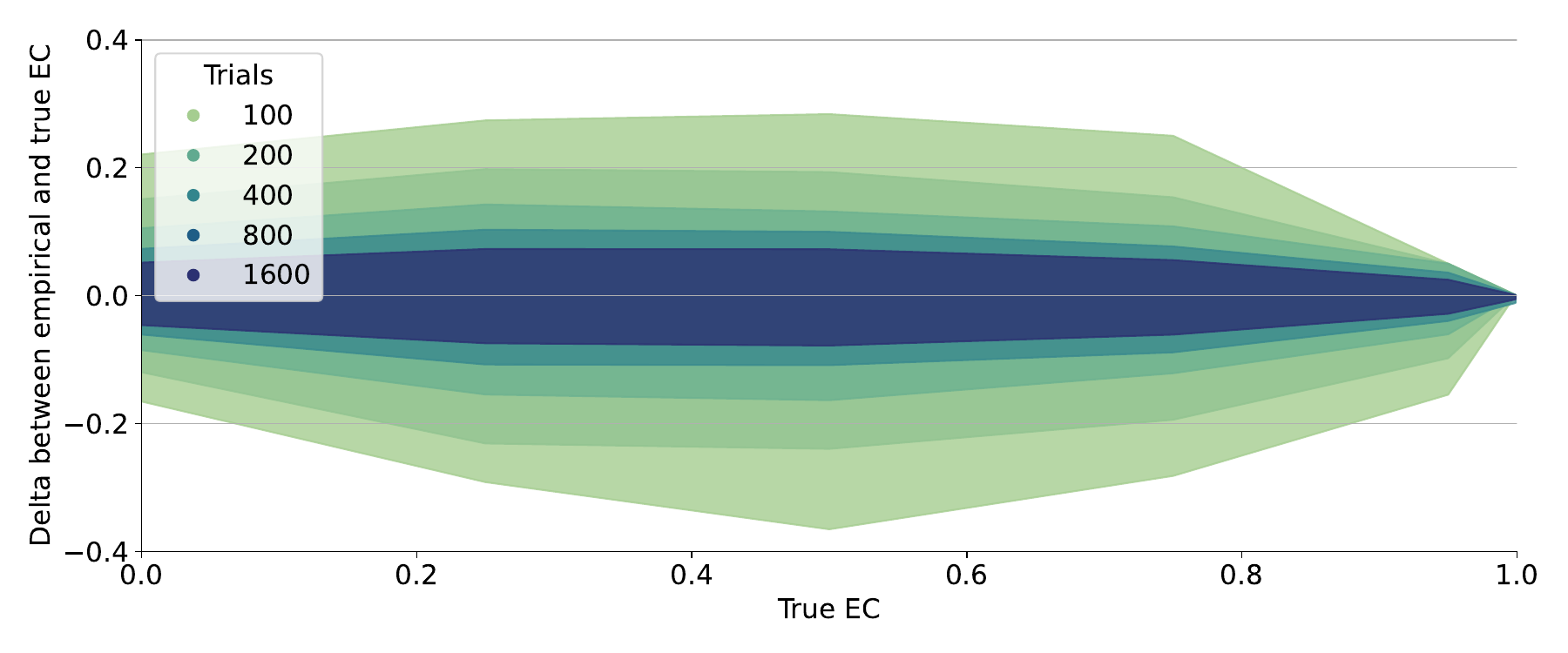}
    \caption{\textbf{Ground-Truth EC against CI sizes.} We simulate data with different ground-truth EC values and plot the size of 95\% confidence intervals. Evidently, one obtains the largest CIs at ground-truth EC of 0.5.}
    \label{fig:appdx_ci_sizes}
\end{figure}

\section{Bootstrapping}
\label{sec:appdx_bootstrap}
For readers unfamiliar with bootstrapping, we provide a basic introduction here, but refer to \cite{efron1994introduction} for details. 
The idea is the following: We want to estimate the uncertainty in a summary statistic (e.g., the mean), which is computed for a dataset $X$ of size $N$. If we had access to the underlying data-generating process, we could simply generate $M$ more datasets (each of size $N$) and compute the summary statistic for each. This would give us a distribution of values of the statistic, from which we could then compute a measure of uncertainty (e.g. a percentile interval of the distribution of means). Without access to the true data-generating process, we build the best possible parameter-free model: To generate a data point, randomly select one of the data points $x_i$ with uniform probability. To generate an entire dataset, draw $N$ data points with uniform probability, with replacement. In the context of error consistency between two observers, one data point corresponds to one trial, i.e. a pair of responses $r_i = (r^{(1)}_i, r^{(2)}_i) \in \{0,1\}^2$. The full algorithm to obtain a confidence interval for two sequences of responses $r^{(1)}$ and $r^{(2)}$ can thus be summarized in \cref{alg:bootstrap_ec}.

\begin{algorithm}
\caption{Bootstrapping EC confidence intervals}
\label{alg:bootstrap_ec}
\begin{algorithmic}[1]
\Require Arrays $r^{(1)}, r^{(2)} \in \{0,1\}^N$, number of bootstrap iterations $M$, sample size $N$
\Ensure $95\%$ confidence interval for error consistency
\State $\mathcal{L} \gets \emptyset$ \Comment{Initialize list of EC values}
\For{$i = 1$ to $M$}
    \State $\mathcal{I} \gets$ sample $N$ indices from $\{1, \ldots, N\}$ uniformly with replacement
    \State $ec \gets \text{EC}(r^{(1)}[\mathcal{I}], r^{(2)}[\mathcal{I}])$ \Comment{Compute error consistency}
    \State $\mathcal{L} \gets \mathcal{L} \cup \{ec\}$ \Comment{Add to list}
\EndFor
\State $[\text{lower}, \text{upper}] \gets \text{Percentile}_{95\%}(\mathcal{L})$ \Comment{Compute 95\% interval}
\State \Return $[\text{lower}, \text{upper}]$
\end{algorithmic}
\end{algorithm}

Similarly, the algorithm for estimating confidence intervals without access to real trials is described in \cref{alg:generative_ec}.

\begin{algorithm}
\caption{Bootstrapping EC confidence intervals without real trials}
\label{alg:generative_ec}
\begin{algorithmic}[1]
\Require Target marginals $P^{(1)}, P^{(2)}$ over $K$ classes, target error consistency $\text{EC}$, number of trials $N$, number of simulations $M$
\Ensure $95\%$ confidence interval for error consistency
\State \textbf{Compute underlying parameters:}
\State $p_{\text{copy}} \gets \text{EC} \cdot \left( \frac{ 1 - \sum_{y=1}^K P^{(1)}(y) P^{(2)}(y) }{ 1 - \sum_{y=1}^K P^{(1)}(y)^2 } \right)$
\State $\tilde{P}^{(1)}(y) \gets P^{(1)}(y)$ for all $y \in \{1, \ldots, K\}$
\State $\tilde{P}^{(2)}(y) \gets \frac{P^{(2)}(y) - p_{\text{copy}} \cdot P^{(1)}(y)}{1-p_{\text{copy}}}$ for all $y \in \{1, \ldots, K\}$
\State
\State $\mathcal{L} \gets \emptyset$ \Comment{Initialize list of EC values}
\For{$i = 1$ to $M$}
    \State \textbf{Generate responses for first classifier:}
    \For{$j = 1$ to $N$}
        \State $r^{(1)}_j \sim \text{Categorical}(\tilde{P}^{(1)})$ \Comment{Sample from marginal}
    \EndFor
    \State
    \State \textbf{Generate responses for second classifier:}
    \State $N_{\text{copy}} \gets \lfloor p_{\text{copy}} \cdot N \rfloor$ \Comment{Number of responses to copy}
    \State $\mathcal{I}_{\text{copy}} \gets$ sample $N_{\text{copy}}$ indices from $\{1, \ldots, N\}$ without replacement
    \For{$j = 1$ to $N$}
        \If{$j \in \mathcal{I}_{\text{copy}}$}
            \State $r^{(2)}_j \gets r^{(1)}_j$ \Comment{Copy response}
        \Else
            \State $r^{(2)}_j \sim \text{Categorical}(\tilde{P}^{(2)})$ \Comment{Sample independently}
        \EndIf
    \EndFor
    \State
    \State $ec \gets \text{EC}(r^{(1)}, r^{(2)})$ \Comment{Compute error consistency}
    \State $\mathcal{L} \gets \mathcal{L} \cup \{ec\}$
\EndFor
\State $[\text{lower}, \text{upper}] \gets \text{Percentile}_{95\%}(\mathcal{L})$ \Comment{Compute 95\% interval}
\State \Return $[\text{lower}, \text{upper}]$
\end{algorithmic}
\end{algorithm}

\begin{figure}
    \centering
    \includegraphics[width=\linewidth]{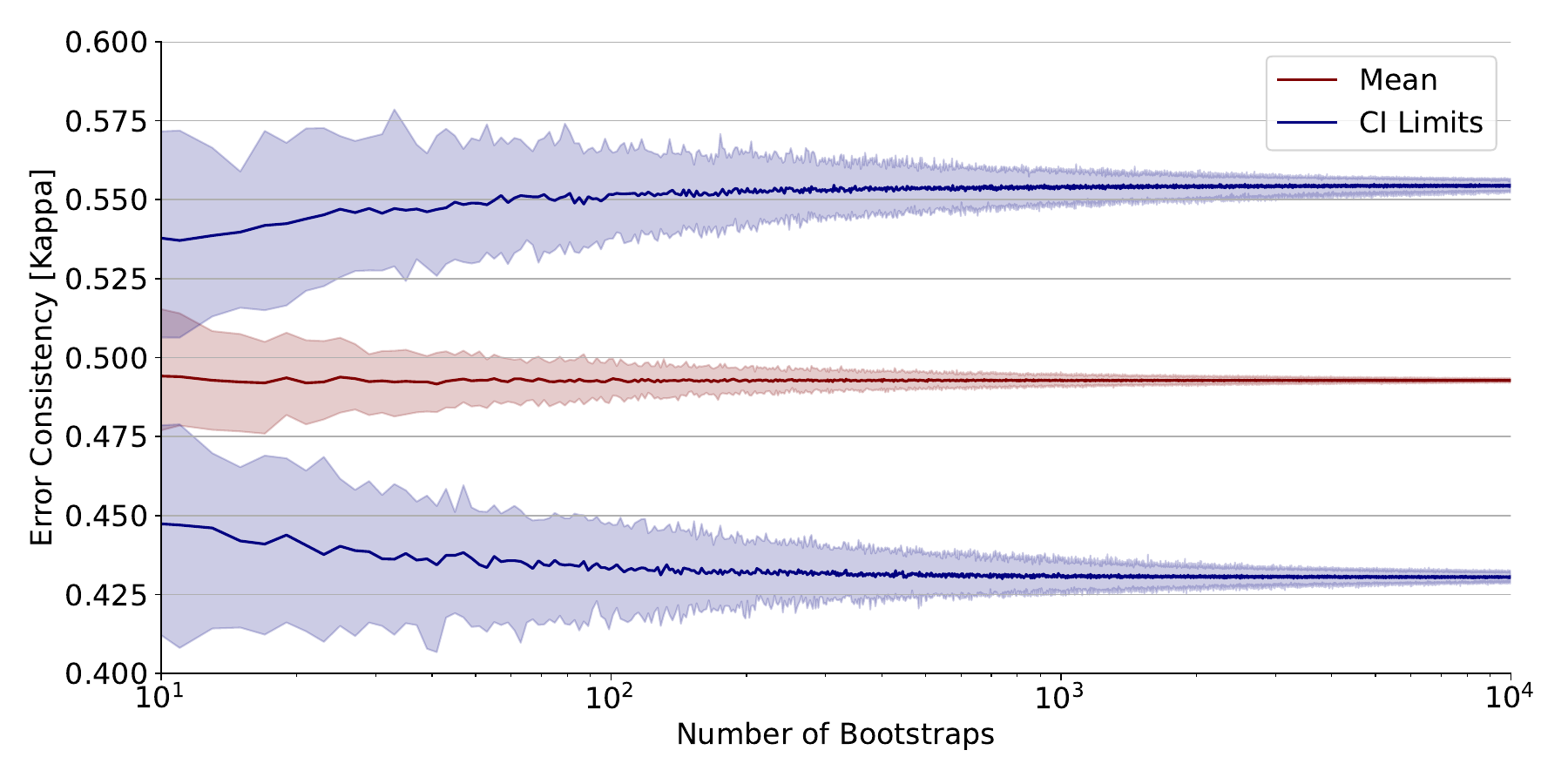}
    \vspace{-20px}
    \caption{\textbf{Convergence of EC estimates as a function of number of bootstraps.} Starting from a single sequence of $1,000$ trials generated by our copy model (arbitrary target EC of $0.5$, classifier accuracies $0.7$ and $0.8$ respectively) we chose a number of bootstrap runs and bootstrapped $100$ times for each, to obtain estimates of uncertainty. Shaded regions correspond to 95\% percentile intervals. Evidently, at the $M=10,000$ bootstrap runs we use, the estimates have stabilized, thus justifying our choice of $M$.}
    \label{fig:appdx_bootstrap_convergence}
\end{figure}

\section{Validating the copy model}
As proven by \cref{prop:hard}, our copy model is correct in the limit of trials. Additionally, we empirically validate the uncertainty estimates we derive from this model, by first generating some ground-truth data with a fixed error consistency at a fixed accuracy of both classifiers. We then bootstrap this data $M=1,000$ times, as we would to obtain percentile intervals. Next, we instead draw $M$ samples from our copy model, to compare the CIs implied by our copy model to those implied by the bootstrap, which we consider the gold standard. We plot this comparison in \cref{fig:appdx_model_validation}. Evidently, the distributions we obtain from our copy model match the bootstrapped results quite well, as long as the number of trials is sufficient or the accuracy avoids floor- and ceiling-effects.

\begin{figure}
    \centering
    \includegraphics[width=\linewidth]{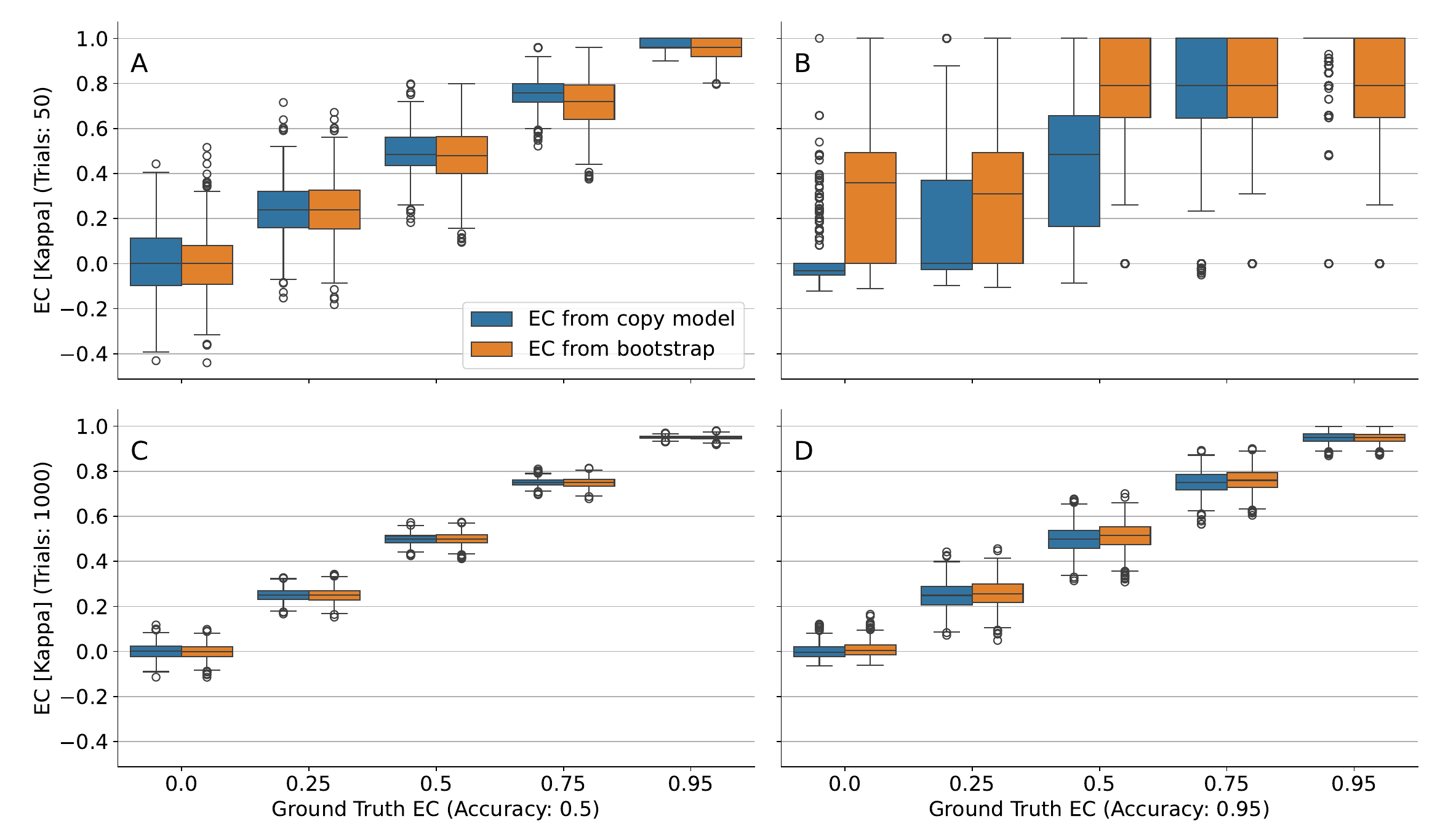}
    \caption{\textbf{Validating our copy model.} We plot the distributions of EC estimates from the copy model to bootstrapped (gold standard) estimates of uncertainty. \textbf{Top:} 50 trials. \textbf{Bottom:} 1000 trials. \textbf{Left:} Accuracy 50\%. \textbf{Right:} Accuracy 95\%. Evidently, if both the number of trials is low and accuracy approaches the ceiling of 100\%, the model diverges from bootstrapped results, but error consistency is not well-behaved in this case anyway. Ideally, experiments should avoid this regime, which can be achieved by increasing the number of trials and calibrating their difficulty.}
    \label{fig:appdx_model_validation}
\end{figure}

\section{Details on Model-VS-Human results}
\label{sec:appdx_mvh}

An idiosyncrasy of the model-vs-human benchmark that we have not expanded on so far is the fact that while the final scores (which are averages across the different corruptions) are stable, the uncertainty within each corruption can be quite large. In \cref{fig:appdx_phase_scrambling}, we plot the CIs surrounding the EC values obtained for the phase-scrambling corruption as an example. In \cref{fig:appdx_experiment_rankings} we plot the pairwise rank correlations between all 17 corruptions, revealing that some, but not all corruptions are highly correlated, i.e. imply similar model rankings. In \cref{fig:appdx_CIs_without_hierarchy}, we plot the confidence intervals one would obtain under the alternative aggregation strategy of simply averaging across conditions, without grouping by corruption first. Evidently, this would drastically affect the confidence intervals. To be clear, this is a deviation from the MvH protocol and thus ``wrong'', but the data itself is the same, thus illustrating the point that there is large variance in the data, which is obfuscated by the hierarchical aggregation.

In \cref{fig:demo}b, we explain how in many of the conditions within model-vs-human, the accuracy mismatch between models and human participants is so large that $k_{max}$ takes on very small values. One might wonder how model-vs-human scores would change if one excluded conditions not as prescribed by the authors, but based on this mismatch. However, this would render the selection of conditions dependent on the model. Hence, the benchmark would no longer present a fair comparison of models, and final scores would no longer be comparable.

\begin{figure}
    \centering
    \includegraphics[width=\linewidth]{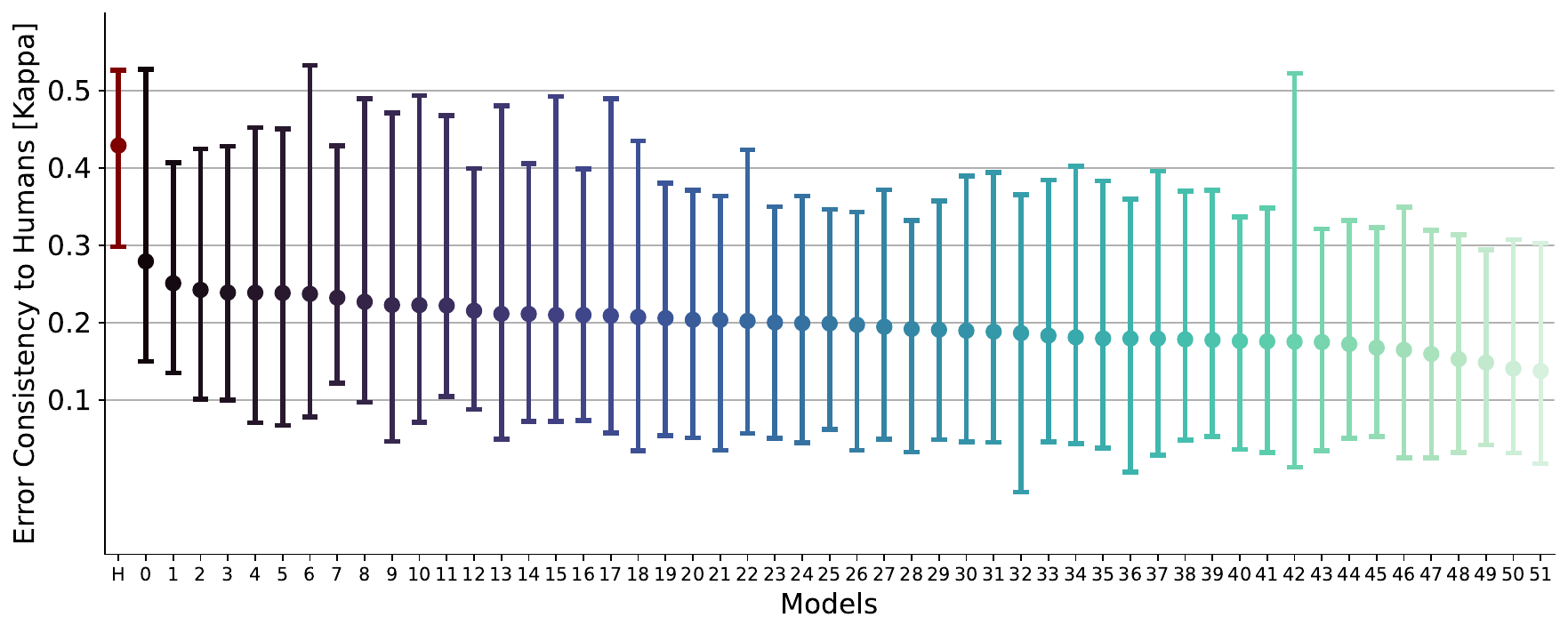}
    \caption{\textbf{EC to humans without hierarchical aggregation.} If one were to simply define a model's EC to humans as its average EC over all conditions (without first grouping by corruption), one would obtain the 95\% confidence intervals depicted here, which clearly overlap for all models.}
    \label{fig:appdx_CIs_without_hierarchy}
\end{figure}

\begin{figure}
    \centering
    \includegraphics[width=\linewidth]{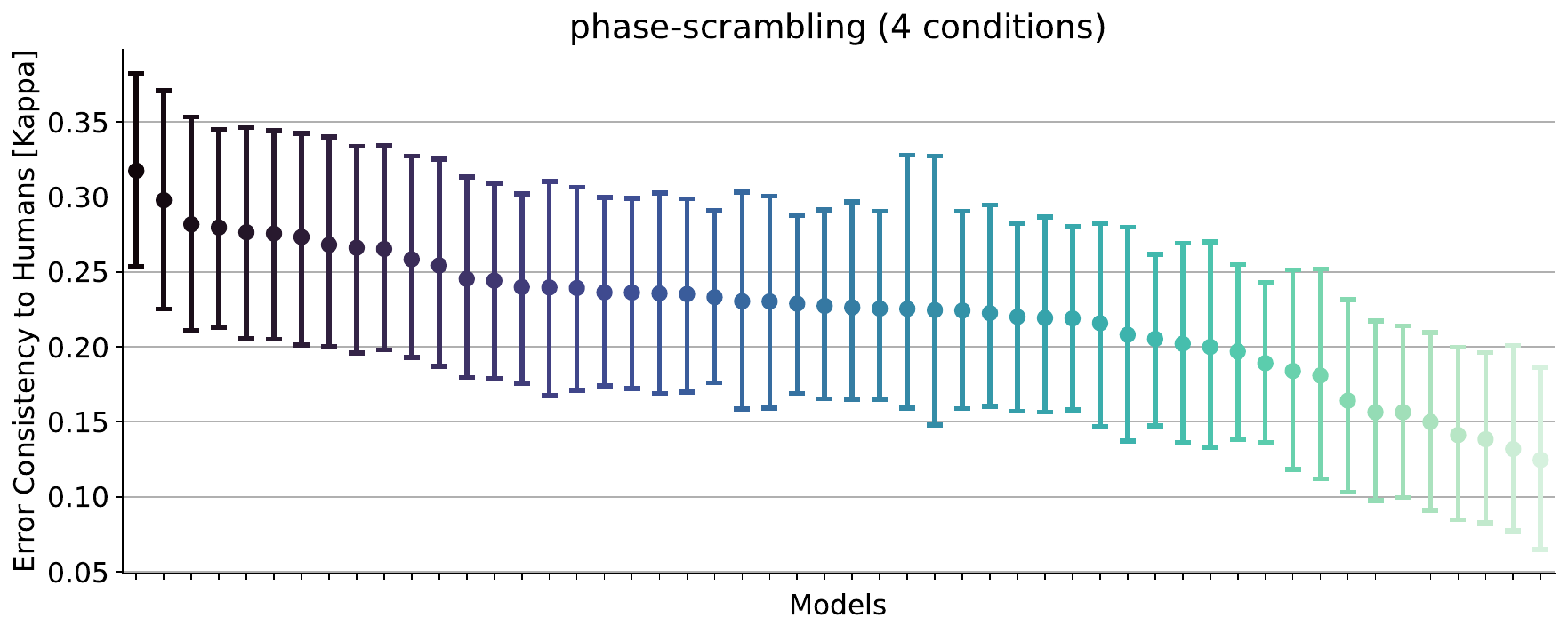}
    \caption{\textbf{ECs on phase-scrambled images.} We plot CIs for the EC to humans achieved by all models in the phase-scrambling corruption as an example, analogous to \cref{fig:robert_model_comparison}}
    \label{fig:appdx_phase_scrambling}
\end{figure}

\begin{figure}
    \centering
    \includegraphics[width=\linewidth]{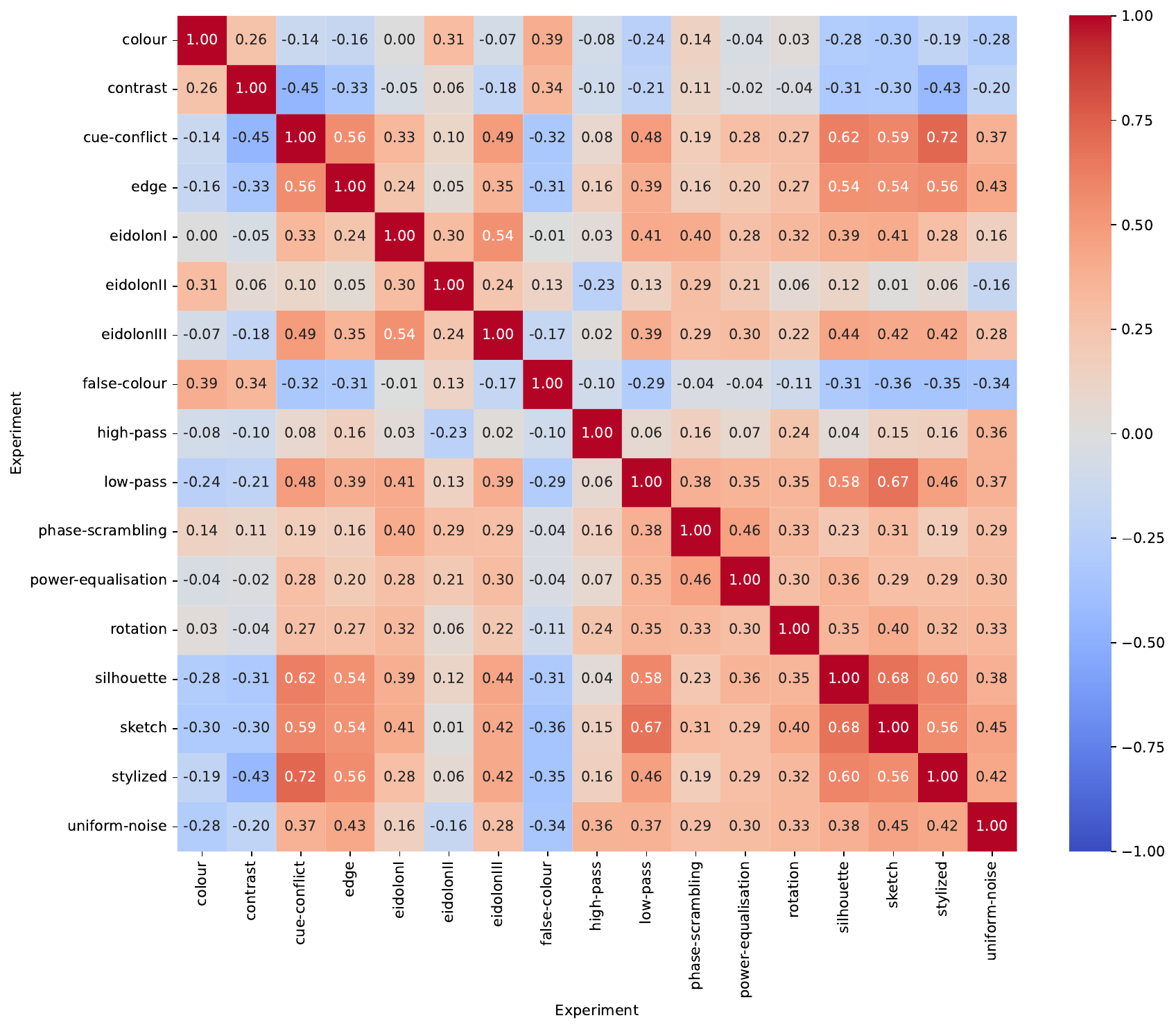}
    \caption{\textbf{Spearman Correlation between Experiments.} For all possible pairs formed by the 17 corruptions from model-vs-human, we compute the Spearman rank correlation between the rankings of models they imply. All rank correlations are statistically significant at $\alpha = 0.05$ (without correcting for multiple comparisons).}
    \label{fig:appdx_experiment_rankings}
\end{figure}

\begin{figure}
    \centering
    \includegraphics[width=\linewidth]{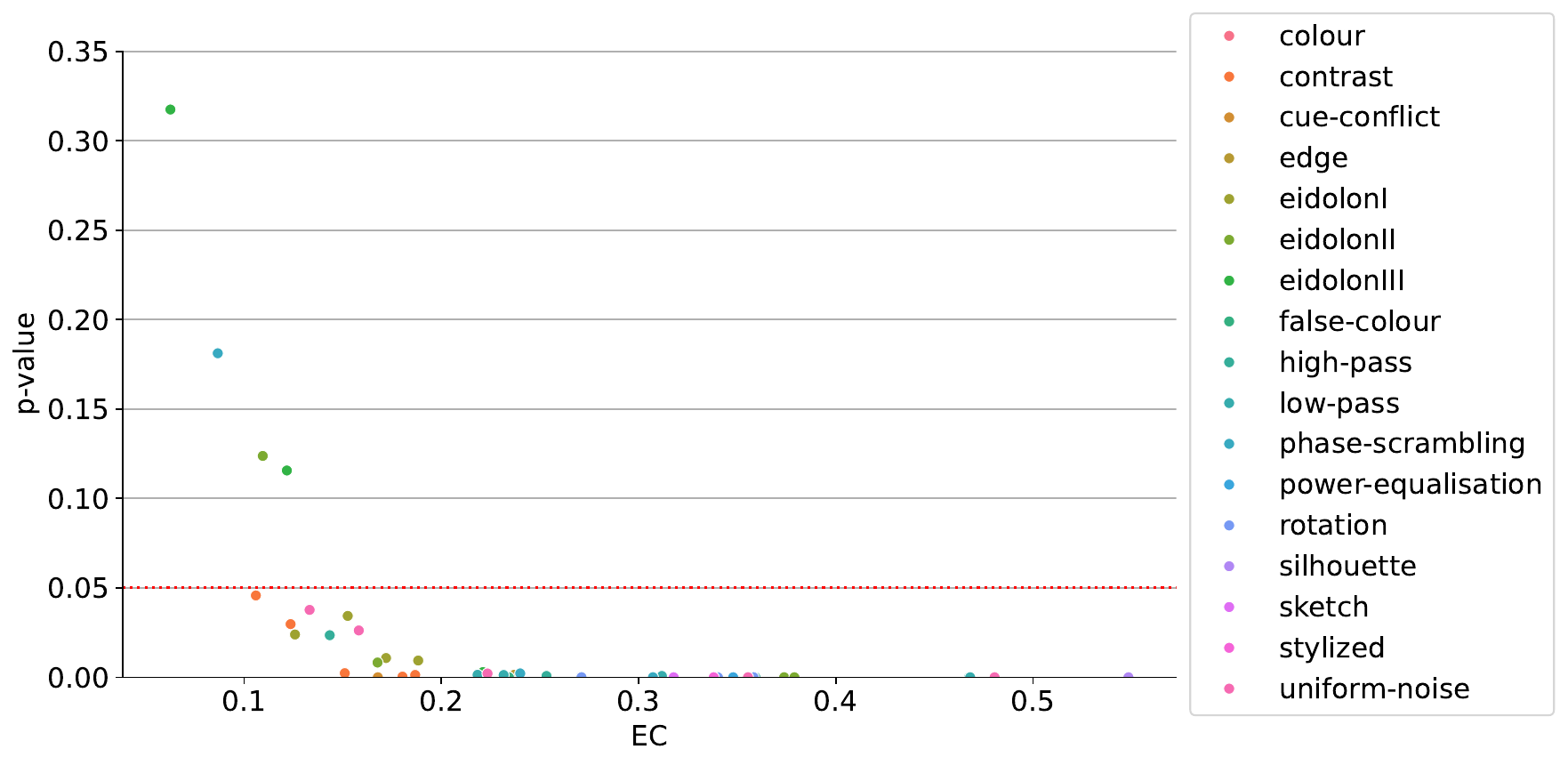}
    \caption{\textbf{p-values for the EC between CLIP and human participants.} For all experimental conditions (corruptions at a certain strength) we scatter the EC against the p-value, finding that the majority of conditions is significant, while some values close to 0 are not, which is reasonable.}
    \label{fig:appdx_pvalues}
\end{figure}

\begin{table}[t]
    \centering
    \begin{tabularx}{\textwidth}{rlrl}
        \toprule
        Index & Model Name & Index & Model Name \\
        \midrule
        0 & clip & 26 & vgg16\_bn \\
        1 & BiTM\_resnetv2\_101x1 & 27 & mobilenet\_v2 \\
        2 & BiTM\_resnetv2\_152x2 & 28 & vgg11\_bn \\
        3 & BiTM\_resnetv2\_50x1 & 29 & resnet101 \\
        4 & resnet50\_l2\_eps5 & 30 & resnet152 \\
        5 & resnet50\_l2\_eps3 & 31 & mnasnet1\_0 \\
        6 & ResNeXt101\_32x16d\_swsl & 32 & wide\_resnet101\_2 \\
        7 & BiTM\_resnetv2\_152x4 & 33 & resnext50\_32x4d \\
        8 & BiTM\_resnetv2\_50x3 & 34 & resnext101\_32x8d \\
        9 & resnet50\_l2\_eps1 & 35 & simclr\_resnet50x2 \\
        10 & vit\_large\_patch16\_224 & 36 & vgg13\_bn \\
        11 & vit\_base\_patch16\_224 & 37 & simclr\_resnet50x4 \\
        12 & vit\_small\_patch16\_224 & 38 & simclr\_resnet50x1 \\
        13 & transformer\_L16\_IN21K & 39 & resnet50\_l2\_eps0 \\
        14 & densenet201 & 40 & MoCoV2 \\
        15 & resnet50\_swsl & 41 & wide\_resnet50\_2 \\
        16 & inception\_v3 & 42 & efficientnet\_l2\_noisy\_student\_475 \\
        17 & transformer\_B16\_IN21K & 43 & squeezenet1\_1 \\
        18 & resnet50 & 44 & mnasnet0\_5 \\
        19 & densenet169 & 45 & InfoMin \\
        20 & BiTM\_resnetv2\_101x3 & 46 & alexnet \\
        21 & resnet34 & 47 & shufflenet\_v2\_x0\_5 \\
        22 & resnet50\_l2\_eps0\_5 & 48 & squeezenet1\_0 \\
        23 & resnet18 & 49 & MoCo \\
        24 & vgg19\_bn & 50 & PIRL \\
        25 & densenet121 & 51 & InsDis \\
        \bottomrule
    \end{tabularx}
    \vspace{10px}
    \caption{\textbf{Model Indices for MvH.} We map model indices in \cref{fig:robert_model_comparison} to their full names.}
    \label{tab:appdx_model_names}
\end{table}

\section{Generating Synthetic Brain-Score Data}
\label{sec:appdx_brainscore}

For the analysis conducted in \cref{sec:brainscore}, we estimate the size of confidence intervals surrounding models on the public Brain-Score benchmark. To estimate a model's EC to humans within one condition $c$, we would technically have to create one set of binary trials that has the property of having the desired average EC to the $n$ human observers of $\kappa_{real, c}$. To simplify, we drop this dependence and generate $n$ independent sequences of trials, one for each human who saw images in this condition. Generating only one sequence of trials would drastically overestimate the variance, which is reduced by taking the mean. For each of these sequences, we need three inputs for the copy-model: The desired kappa (which is given by the Brain-Score data after removing the ceiling correction), the accuracy of the humans (which is given by the model-vs-human data) and the accuracy of the model in this condition. The latter is unknown, but bounded mathematically. We obtain the CI sizes in \cref{fig:brainscore_ranking} by selecting the accuracy in the middle of the bounds, which seems reasonable. Selecting the lower bounds of the accuracy leads to the most optimistic estimate of CI size, which we plot in \cref{fig:appdx_brainscore_lower}.

\begin{figure}
    \centering
    \includegraphics[width=\linewidth]{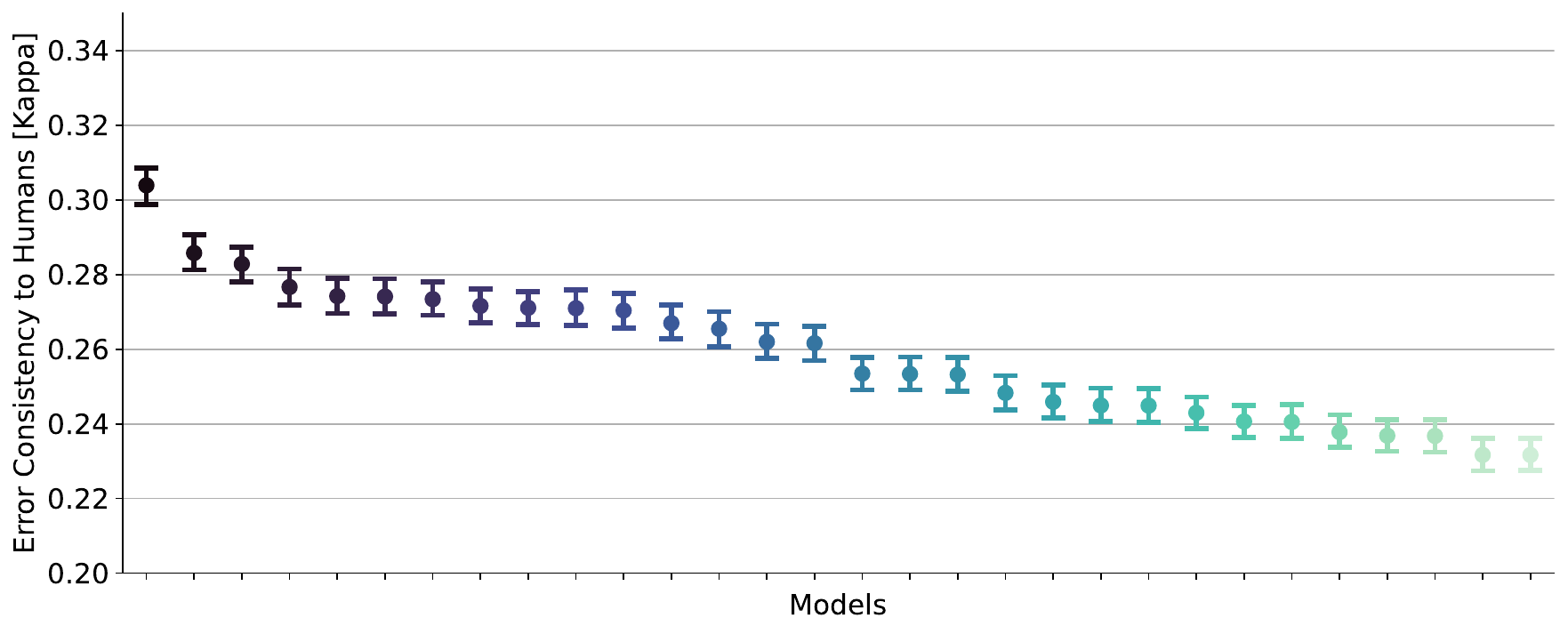}
    \caption{\textbf{Most optimistic Confidence Intervals for Brain-Score ECs.} Like in \cref{fig:brainscore_ranking}, we plot the 95\% confidence interval of the EC of the top 30 Brain-Score models, but assume the lowest mathematically possible model accuracy. The true CI size for these models probably looks more like \cref{fig:brainscore_ranking}, but this is a lower bound on their size.}
    \label{fig:appdx_brainscore_lower}
\end{figure}

\end{document}